\documentclass[12pt]{article}
\usepackage[T1]{fontenc}

\usepackage{amsmath,amssymb,amsthm}
\usepackage{mathtools}
\usepackage{sgame}
\usepackage{enumitem}
\usepackage{tikz}
\usepackage{natbib}
\usepackage[colorlinks,citecolor=blue,urlcolor=blue]{hyperref}

\numberwithin{equation}{section}
\theoremstyle{plain}
\newtheorem{theorem}{Theorem}[section]
\newtheorem{lemma}{Lemma}[section]
\newtheorem{proposition}{Proposition}[section]
\newtheorem{remark}{Remark}[section]
\newtheorem{corollary}{Corollary}[section]
\newtheorem{definition}{Definition}[section]

\newcommand{\ICR}{{\rm R}} 
\newcommand{\ICRt}{{\rm ICR}} 
\newcommand{\BR}{{\rm br}} 
\newcommand{\AICR}{{\rm AR}} 

\newcommand\naturals{\mathbb N}

\newcommand{\marg}{\text{marg}} 
\newcommand{\id}{\text{id}} 

\begin{document}
\title{
Direct Representations for Interim Correlated Rationalizability\footnote{Olivier Gossner acknowledges support from the French National Research Agency (ANR), ``Investissements d'Avenir'' (ANR-11-IDEX-0003/LabEx Ecodec/ANR-11-LABX-0047) and from UKRI (SInfoNiA). The authors are grateful to Fran\c{c}oise Forges for stimulating discussions and comments.} \\[0.5cm] 
} 
\author{Olivier Gossner\thanks{CNRS - CREST - \'Ecole Polytechnique and London School of Economics} \quad Rafael Veiel\thanks{University of Texas at Austin}}
\date{\today}
\maketitle
\begin{abstract}
We study direct representations of information for interim correlated rationalizability. For a fixed finite payoff structure, each type induces a hierarchy of surviving action sets. Pushing the common prior through this map projects the information structure onto the solution concept's output language. When best-response regions are convex, this representation is direct: the solution concept applied to the hierarchy seen as a type is the identity. The induced distributions are characterized by level-by-level obedience constraints. Terminal \ICRt\ sets alone do not have this property. For arbitrary finite payoff structures, we refine each hierarchy level with a tag identifying a convex cell of its best-response region. Augmented hierarchies provide a direct representation and project onto the ordinary hierarchy. Full augmented hierarchies may form a continuum, but retaining only the tags at the boundaries of constant stretches of the ordinary hierarchy yields an exact countable representation with finitely many obedience constraints per type. Finite-type models are dense in terminal rationalizability outcome distributions.
\end{abstract}

 \section{Introduction}\label{sec:intro}

Interim correlated rationalizability, or \ICRt, captures how higher-order beliefs restrict behavior under common certainty of rationality \citep{dekel2007interim}. The higher-order-belief logic underlying \ICRt\ is central to global-games analysis \citep{carlsson1993global,morris2003global}, informational robustness \citep{bergemann2013robust,oyama2020generalized}, and the design of information and incentives to obtain outcomes robust to strategic uncertainty \citep{halac2021rank,halac2022addressing}.

A common question in these literatures is to characterize the induced behaviors across all common-prior information structures for a given payoff structure. For \ICRt, this entails searching over arbitrary type spaces and the higher-order beliefs they encode, a domain that is vast and generally intractable. In binary-action supermodular games, \citet{morris2020implementation} show that a related information-design problem can be reduced to state--action distributions satisfying obedience and sequential obedience. Outside such special environments, no general exact reduction of the information-structure search is available for the \ICRt\ calculation in arbitrary finite games.

Our benchmark is correlated equilibrium and its incomplete-information extensions~\citep{aumann1974subjectivity,aumann1987correlated,forges1986approach,forges1993five,bergemann2016bayes}. There, every induced state--action distribution can itself serve as an information structure: draw the state and action profile according to that distribution and privately reveal to each player her action. Following the recommendations reproduces the distribution, and every distribution that self-implements in this way is induced by itself viewed as an information structure. This construction provides two reductions: the outcome distribution itself is a direct information structure, and feasible distributions are characterized by linear obedience inequalities written directly on the state--action distribution. These properties are fundamental to comparing information structures and to information design \citep{bergemann2019information}. Our objective is to obtain both for \ICRt: a direct representation in payoff-dependent records and an obedience characterization of the distributions over those records.

To identify the relevant payoff-dependent records, recall that the behavioral implications of common certainty of rationality can be computed type by type by iterated deletion of never-best responses in the agent form, where each type is treated as a separate agent. Starting from all actions, at each round a type retains the actions that are optimal under some joint distribution consistent with her conditional belief about the state and the opponents' surviving actions. The resulting decreasing sequence $(\ICR_i^m(t_i))_{m\geq0}$ is the ordinary \ICRt\ hierarchy of type $t_i$, and its intersection $\ICR_i^\infty(t_i)=\bigcap_m\ICR_i^m(t_i)$ is her terminal \ICRt\ set. These are distinct candidate records: the terminal set gives the eventual rationalizable actions, whereas the hierarchy also records the finite-order reasoning that generates them.

Given a candidate record, we map each type to its record and push the common prior forward through this map. The representation is direct if, when the resulting distribution is itself viewed as a common prior on record types, recomputing the calculation reproduces the announced records; the induced record map is then the identity. Directness concerns the reproduction of records. A separate question is which distributions over records can arise from an underlying information structure. We characterize these distributions by conditional best-response requirements, which we call obedience constraints. Two further dimensions matter: the cardinality of the direct type space and the number of nontrivial constraints required for each type.

Our main result gives an exact countable representation for every finite payoff structure. We construct a countable direct refinement of ordinary \ICRt\ hierarchies. A distribution of the state and ordinary hierarchies is induced by a common-prior information structure if and only if there exists a distribution over the refined records satisfying obedience whose marginal on the state and ordinary hierarchies is the original distribution. At most $2|A_i|$ nontrivial obedience constraints are required for player $i$ and each type. The theorem therefore replaces the search over arbitrary information structures by a search over distributions on a countable payoff-dependent domain subject to finitely many conditions per type.

The terminal \ICRt\ set is the simplest candidate record, but it need not be direct. In a version of the electronic-mail game of \citet{rubinstein1989electronic}, an initial elimination propagates along a chain of types: each subsequent elimination is sustained by an earlier elimination for the preceding type. All types nevertheless have the same terminal set. Projecting types only onto that set removes the information supporting the chain of eliminations, and recomputing \ICRt\ enlarges it. The ordinary hierarchy is therefore the next candidate record.

Whether this record is direct depends on whether averaging the beliefs of types with the same hierarchy preserves their best-response sets. This leads us to distinguish $\BR$-convex payoff structures, for which ordinary \ICRt\ hierarchies are direct, from general finite payoff structures, where ordinary hierarchies may fail to be direct and must be refined.

A best-response region is the set of beliefs that generate a given action set; a payoff structure is $\BR$-convex when every such region is convex. Conditional on an ordinary hierarchy, the projected belief averages original beliefs that generated the same action set at every round. Under $\BR$-convexity, this average remains in the same region, so ordinary \ICRt\ hierarchies are direct. Every binary-action payoff structure is $\BR$-convex, as are ordered environments with interval best-response sets; concavity in own action together with increasing differences provides a primitive sufficient condition.

Having established directness, we turn to the size of the representation and the number of obedience constraints. The space of ordinary \ICRt\ hierarchies is countable independently of $\BR$-convexity. A decreasing sequence of subsets of a finite action set has finitely many strict decreases and is determined by a finite chain of action sets and their transition dates. In the $\BR$-convex case, the direct characterization initially imposes obedience at every level. Monotonicity of best responses makes the intermediate constraints on each constant stretch redundant: obedience at its entry and finite exit, together with a terminal condition, implies obedience throughout the stretch. A hierarchy with $q_i$ constant stretches therefore requires at most $2q_i\leq2|A_i|$ nontrivial conditions for player $i$.

For general finite payoff structures, ordinary hierarchies need not be direct: averaging beliefs within a nonconvex best-response region may leave that region. We restore convexity by partitioning every best-response region into finitely many convex cells and recording the cell containing the relevant belief at each round. The resulting full augmented hierarchy is direct for the augmented calculation and projects level by level onto the ordinary hierarchy. Full augmentation has two shortcomings. Its type space may have the cardinality of the continuum because cell tags can change indefinitely while the ordinary action set remains constant, and its level-by-level characterization may require countably many obedience constraints per type.

Both shortcomings come from recording a cell tag at every round, including within constant stretches of the ordinary hierarchy. The ordinary hierarchy already records the successive action sets and their transition dates, while monotonicity makes obedience at intermediate rounds redundant. A transition-tagged hierarchy therefore preserves the ordinary hierarchy but records a cell tag only at the entry and finite exit of each constant stretch. These tags retain the convex-cell information needed for directness while discarding all intermediate tags. Transition-tagged hierarchies form a countable direct type space. Obedience need only be imposed at the recorded transitions, together with one terminal condition, so at most $2|A_i|$ nontrivial conditions are required for each player and type. Projecting away the tags preserves the exact state--hierarchy distribution.

The closest information-design benchmark is \citet{morris2020implementation}. For binary-action supermodular games, they use obedience and sequential obedience on a countable domain to characterize the closure of state--action distributions induced by the smallest equilibrium. We instead provide an exact representation of state--\ICRt\ hierarchy distributions for arbitrary finite games. The represented objects differ: we do not characterize the projection onto rationalizable state--action distributions.

Countability cannot generally be replaced by finiteness for exact hierarchy representation. The electronic-mail example generates infinitely many ordinary hierarchies through unbounded transition dates, so no finite type space can reproduce its state--hierarchy distribution exactly. This obstruction concerns exact representation of the entire hierarchy. At the level of terminal behavior, for every fixed payoff structure, finite common-prior models are dense, in total variation, in the induced distributions of the state and terminal \ICRt\ sets.

The approximation result complements \citet[Theorem~3]{dekel2006topologies}. They show that finite types are dense without the common-prior restriction in the strategic topology, which compares behavior across games, while finite common-prior types are not dense in that topology. We fix one payoff structure, retain the common-prior restriction, and approximate distributions of the state and terminal \ICRt\ sets rather than types across games.

The paper is organized as follows. Section~\ref{sec:rel_lit} discusses related literature. Section~\ref{sec:model} defines the model and gives the two motivating examples. Section~\ref{sec:brconvex} establishes direct representation by ordinary hierarchies under \BR-convexity, reduces obedience to transition rounds, and identifies polyhedral classes. Section~\ref{sec:generalRP} constructs full augmented hierarchies and then obtains the general exact countable representation with transition-tagged hierarchies. Section~\ref{sec:finite-approximation} establishes finite approximation.

\section{Related Literature}\label{sec:rel_lit}

The direct-representation problem studied here connects four strands of work: canonical representations of information, correlated equilibrium and information design, rationalizability, and approximation of belief hierarchies.

Harsanyi type spaces encode beliefs about payoff-relevant uncertainty and other players' types \citep{harsanyi1967games,harsanyi1968games,Harsanyi1968GamesWI}, while the universal type space of \citet{mertens1985formulation} provides a canonical space of coherent belief hierarchies. That space is game-independent and, in nondegenerate environments, has the cardinality of the continuum. Our final direct space depends on the fixed payoff structure and is countable. \citet{dekel2007interim} show that canonical belief hierarchies determine \ICRt\ behavior across games, and \citet{ely2006hierarchies} construct a universal type space for interim rationalizability under a different correlation restriction. \citet{govei2026strategic} axiomatize game-dependent strategic type spaces and characterize feasible ordinary \ICRt\ hierarchies through a finite automaton. The present paper asks when such payoff-dependent records reproduce themselves under a common prior and how they must be refined when ordinary hierarchies do not. Related approaches compress information by its decision or equilibrium consequences: posterior beliefs represent signals in single-agent problems \citep{blackwell1953equivalent}, while \citet{gossner2000comparison} and \citet{GosMer01} compare information structures through their consequences across games.

In correlated equilibrium, a recommendation can serve as the signal and obedience characterizes the resulting distributions \citep{aumann1974subjectivity,aumann1987correlated}; \citet{BrandenburgerDekel1987} connect correlated rationalizability to subjective correlated equilibrium. Communication equilibrium and Bayes correlated equilibrium extend direct recommendation and obedience to incomplete-information environments \citep{forges1986approach,forges1993five,bergemann2016bayes,bergemann2019information}. \citet{mathevet2020information} formulate information design through distributions over belief hierarchies. Their formulation covers environments in which action recommendations do not suffice. We instead project information into the payoff-dependent output language of \ICRt\ and require that the projected calculation reproduce itself. The broader revelation-principle background comes from mechanism design \citep{Myerson1979,Myerson1982}; \citet{hammond1994revelation} studies implementation under Bayesian rationalizable strategies.

Rationalizability was introduced by \citet{bernheim1984rationalizable} and \citet{pearce1984rationalizable}; \citet{dekel2007interim} define its interim correlated version. Neighboring work develops epistemic foundations and belief restrictions \citep{BattigalliSiniscalchi2003,battigalli2011interactive}, belief-free rationalizability and informational robustness \citep{bergemann2017belief}, interim partially correlated rationalizability \citep{Tang2015}, infinite games \citep{WeinsteinYildiz2017}, informational robustness under common belief in rationality \citep{ziegler2022informational}, and large games \citep{balbus2025interimcorrelatedrationalizabilitylarge}. \citet{halac2021rank} use rank uncertainty to implement work uniquely in a team-production problem, while \citet{halac2022addressing} formulate the design of incentives and information as inducing a desired action profile as the unique rationalizable outcome. We study the geometry of the best-response operator and use finite convex refinements when best-response regions are nonconvex. The closest operational benchmark is \citet{morris2020implementation}. For binary-action supermodular games, they characterize the closure of smallest-equilibrium implementable state--action distributions through obedience and sequential obedience and construct countable implementing information structures. We treat arbitrary finite games and represent exact state--\ICRt\ hierarchy distributions on a countable transition-tagged domain with finitely many obedience constraints per type. We do not characterize the projection onto rationalizable state--action distributions.

\citet{lipman2003finite} shows that, in finite partitions models, any fixed finite order of beliefs and knowledge consistent with common support can be matched by another finite model with a common prior. The construction changes the higher-order tail, and product-topology proximity need not preserve strategic behavior. Our truncations are instead payoff-dependent: they retain the terminal \ICRt\ set together with a finite prefix of augmented tags and approximate the induced terminal distribution in total variation. \citet[Theorem~3]{dekel2006topologies} show that finite types are dense in the strategic topology, which compares \ICRt\ behavior across bounded finite games, while finite common-prior types are not. Their positive result implies finite approximation for each fixed game without the common-prior restriction; our result supplies the fixed-game, common-prior counterpart for terminal \ICRt\ distributions. \citet{bergemann2024strategic} establish a different common-prior approximation result: finite simple information structures are dense in the almost-common-knowledge topology that makes equilibrium outcomes continuous. Their object is equilibrium behavior across nearby information structures, whereas ours is the terminal \ICRt\ distribution of a fixed payoff structure.

\section{Model}\label{sec:model}

This section defines payoff structures, common-prior information structures, and the \ICRt\ hierarchy. It then presents the two obstructions that motivate our direct representations and records the geometric properties of the best-response operator used below.

\paragraph{Notation}
 For a family of sets $(X_i)_{i}$, we let $X= \prod_i X_i$ and $X_{-i} = \prod_{j\neq i} X_j$, for every $i$. Given a measurable set $X$, $\Delta(X)$ denotes the set of probability distributions on $X$. The set of nonnegative integers is denoted $\naturals$. 

For a family of maps $(f_i\colon X_i\to Y_i)_i$, we write $f=(f_i)_i$ for their product and $f_{-i}=(f_j)_{j\neq i}$ for the product excluding player $i$.

A marginal on coordinates $x_1,\dots, x_n$ of a distribution $P\in \Delta(\prod_\ell X_\ell)$ is denoted $\marg_{x_1,\dots,x_n}(P)$.

For a measurable map $f\colon X\to Y$ and a probability measure $\mu\in\Delta(X)$, we denote by $f_\#\mu\in\Delta(Y)$ the pushforward measure, defined by $(f_\#\mu)(E)=\mu(f^{-1}(E))$ for every measurable $E\subseteq Y$. We also write $f_\#$ for the induced map from $\Delta(X)$ to $\Delta(Y)$.

\paragraph{Payoff structures}

We fix a finite set $N$ of players and a finite set $K$ of states of nature. A \emph{payoff structure} $u$ consists of a finite action set $A_i$ and a payoff function $u_i\colon K\times A\to \mathbb{R}$ for each player $i$. For every player $i$, let $\mathcal{A}_i \coloneqq 2^{A_i} \setminus \{ \varnothing\}$ denote the set of nonempty action sets.

\paragraph{Common-prior information structures}

A \emph{common-prior information structure} consists of a family of standard Borel type spaces $(T_i)_{i\in N}$ and a probability distribution $P\in\Delta(K\times T)$. We call it a \emph{common prior} when the type spaces are understood. For every player $i$, fix a measurable regular conditional probability kernel
$$
P_i\colon T_i\to\Delta(K\times T_{-i})
$$
induced by $P$. We also write $P(\cdot\mid t_i)=P_i(t_i)$. The selected kernel is defined on all of $T_i$, so the recursive constructions below are defined pointwise. Its interpretation as a conditional distribution induced by $P$, and all identities based on disintegration, are understood outside a set of zero probability under the marginal of $P$ on $T_i$.

\paragraph{Interim Correlated Rationalizability}

Fix a payoff structure $u$ and a common prior $P$ with type spaces $(T_i)_i$. Interim Correlated Rationalizability (ICR) \citep{dekel2007interim} iteratively eliminates never best responses for every type of every player.\footnote{Our presentation slightly differs from \citep{dekel2007interim} but the two definitions are equivalent.} For a belief $p\in\Delta(K\times\mathcal A_{-i})$, let $\mathcal Q_i(p)$ be the set of \emph{compatible joint distributions} $q\in\Delta(K\times\mathcal A_{-i}\times A_{-i})$ satisfying
$$
\marg_{K,\mathcal A_{-i}}q=p
\quad\text{and}\quad
q\bigl(\{(k,B_{-i},a_{-i}):a_{-i}\in B_{-i}\}\bigr)=1.
$$
To define the \ICRt\ iteration, we use the following best-response operator on beliefs over opponent action sets. Player $i$'s best-response map $\BR_i\colon \Delta(K \times \mathcal{A}_{-i}) \to  \mathcal{A}_i$ is defined by
\begin{equation}\label{BR}
\BR_i(p) \coloneqq \bigcup_{q\in\mathcal Q_i(p)}
\arg\max_{a_i\in A_i}\mathbb E_q\bigl[u_i(k,a_i,a_{-i})\bigr].
\end{equation}
For every $B_i\in\mathcal A_i$, its \emph{best-response inclusion region} is
$$
\mathcal I_i(B_i)
\coloneqq
\{p\in\Delta(K\times\mathcal A_{-i}):B_i\subseteq\BR_i(p)\},
$$
and its \emph{best-response region} is $\BR_i^{-1}(B_i)$. These regions satisfy
$$
\BR_i^{-1}(B_i)
=
\mathcal I_i(B_i)
\setminus
\bigcup_{a_i\in A_i\setminus B_i}
\mathcal I_i(B_i\cup\{a_i\}).
$$

The \emph{\ICRt\,hierarchy} $(\ICR^m_i(t_i))_{m \geq 0}$ of a type $t_i\in T_i$ is defined iteratively:
\begin{enumerate}
    \item  [1.] For every  $i\in N$ and $t_i\in T_i$, $\ICR^0_i(t_i) \coloneqq A_i$,
    \item  [2.] For every $m\in \naturals$, the conditional probability $P_i(t_i)\in \Delta(K\times T_{-i})$ and the measurable map $\ICR_{-i}^{m}$ induce the belief $(\id \times \ICR^{m}_{-i})_\#P_i(t_i)$ on $K\times \mathcal{A}_{-i}$. Given this belief,
    \begin{equation}\label{ICR}
    \ICR^{m+1}_i(t_i)
    \coloneqq
    \BR_i\left((\id \times \ICR^{m}_{-i})_\#P_i(t_i)\right).
    \end{equation}
\end{enumerate}

Write
$$
\ICR_i(t_i)\coloneqq\bigl(\ICR_i^m(t_i)\bigr)_{m\in\naturals},
\qquad
\ICR\coloneqq(\ICR_i)_i,
$$
for the full hierarchy maps.

The \emph{terminal \ICRt\ set} of type $t_i$ is
\begin{equation}
\ICR_i^\infty(t_i)
\coloneqq
\bigcap_{m\in\naturals}\ICR_i^m(t_i).
\end{equation}

\begin{lemma}[Properties of the ICR iteration]\label{lem:ICR-properties}
For every common prior $P$, every player $i$, and every $m\in\naturals$, the map
$$
\ICR_i^m\colon T_i\to\mathcal A_i
$$
is measurable. Moreover, for every $t_i\in T_i$,
$$
\varnothing\neq\ICR_i^{m+1}(t_i)\subseteq\ICR_i^m(t_i).
$$
Consequently, $\ICR_i^\infty(t_i)$ is nonempty for every $t_i$, and the map
$$
\ICR_i^\infty\colon T_i\to\mathcal A_i
$$
is measurable.
\end{lemma}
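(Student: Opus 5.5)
The plan is a single induction on $m$ that proves measurability and the monotonicity $\ICR_i^{m+1}(t_i)\subseteq\ICR_i^m(t_i)$ together. Nonemptiness is handled separately. For $p\in\Delta(K\times\mathcal A_{-i})$, the set $\mathcal Q_i(p)$ is nonempty: for each $(k,B_{-i})$, put all the conditional mass on a fixed selection $a_{-i}\in B_{-i}$. The argmax of a linear functional over the finite set $A_i$ is nonempty, so $\BR_i(p)\neq\varnothing$ for every $p$.

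\emph{Measurability.} The base case is trivial because $\ICR_i^0$ is constant. Suppose $\ICR_j^m$ is measurable for every $j$. Then $\ICR_{-i}^m\colon T_{-i}\to\mathcal A_{-i}$ is measurable, so $\id\times\ICR_{-i}^m$ is measurable from $K\times T_{-i}$ to the finite set $K\times\mathcal A_{-i}$. Because $P_i$ is a measurable kernel, the map $t_i\mapsto (\id\times\ICR^m_{-i})_\#P_i(t_i)$ sends $t_i$ to the vector with coordinates $P_i(t_i)\bigl((\id\times\ICR_{-i}^m)^{-1}(k,B_{-i})\bigr)$. Each coordinate is measurable, so this is a measurable map into the finite-dimensional simplex $\Delta(K\times\mathcal A_{-i})$. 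It then remains to show that $\BR_i$ is Borel measurable on this simplex, since $\ICR_i^{m+1}$ is the composition of the two maps. I would use the decomposition $\BR_i^{-1}(B_i)=\mathcal I_i(B_i)\setminus\bigcup_{a_i\notin B_i}\mathcal I_i(B_i\cup\{a_i\})$, which reduces the question to whether each $\mathcal I_i(C)$ is Borel. I expect to show that it is closed.

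\emph{Closedness of $\mathcal I_i(C)$.} This is the step I regard as the main obstacle. Write $\mathcal I_i(C)=\bigcap_{a_i\in C}\{p: a_i\in\BR_i(p)\}$. The set $\{p:a_i\in\BR_i(p)\}$ is the projection, under $q\mapsto\marg_{K,\mathcal A_{-i}}q$, of the set of $q\in\Delta(K\times\mathcal A_{-i}\times A_{-i})$ that are supported on $\{a_{-i}\in B_{-i}\}$ and satisfy $\mathbb E_q[u_i(k,a_i,a_{-i})-u_i(k,a_i',a_{-i})]\ge 0$ for all $a_i'$. This set of $q$ is a compact polytope, and its image under the continuous marginal map is compact. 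Each $\{p:a_i\in\BR_i(p)\}$ is therefore closed (indeed a polytope), and so is their finite intersection $\mathcal I_i(C)$. It follows that $\BR_i^{-1}(B_i)$ is a Boolean combination of closed sets and hence Borel, which completes the measurability induction.

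\emph{Monotonicity and the terminal set.} For $m=0$ the inclusion $\ICR_i^1\subseteq A_i=\ICR_i^0$ holds trivially. Suppose $\ICR_j^{m}(t_j)\subseteq\ICR_j^{m-1}(t_j)$ for all $j$ and $t_j$. Let $p^m=(\id\times\ICR^{m}_{-i})_\#P_i(t_i)$ and take any $q\in\mathcal Q_i(p^m)$. Define $q'$ on $K\times\mathcal A_{-i}\times A_{-i}$ as the pushforward of $q\otimes P_i(t_i)$, coupled through the $T_{-i}$ coordinate, under the map that replaces $B_{-i}=\ICR_{-i}^m(t_{-i})$ by $\ICR_{-i}^{m-1}(t_{-i})$. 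Concretely, I would disintegrate $P_i(t_i)$ given $(k,\ICR^m_{-i})$, draw $t_{-i}$, and keep $a_{-i}$ from $q$. Then $a_{-i}\in\ICR^m_{-i}(t_{-i})\subseteq\ICR^{m-1}_{-i}(t_{-i})$, so $q'\in\mathcal Q_i(p^{m-1})$, and $q'$ has the same marginal on $(k,a_{-i})$ as $q$. The expected payoffs under $q$ and $q'$ therefore coincide, and any argmax under $q$ is an argmax under $q'$. Hence $\BR_i(p^m)\subseteq\BR_i(p^{m-1})$, which is the required inclusion $\ICR_i^{m+1}(t_i)\subseteq\ICR_i^m(t_i)$. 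Finally, a decreasing sequence of nonempty subsets of the finite set $A_i$ stabilizes, so $\ICR_i^\infty(t_i)=\ICR_i^M(t_i)\neq\varnothing$ for large $M$. The terminal map can be written as $\ICR_i^\infty(t_i)=\bigcap_m\ICR_i^m(t_i)$, and $\{\ICR_i^\infty=B\}$ is a countable Boolean combination of the measurable sets $\{a\in\ICR_i^m\}$. This gives measurability of $\ICR_i^\infty$.
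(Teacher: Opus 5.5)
Your proof is correct, and most of it matches the paper's. The nonemptiness argument is the same. The monotonicity step is the same coupling: you couple $q$ with $P_i(t_i)$ conditionally on $(k,\ICR_{-i}^m)$, so that $a_{-i}\in\ICR_{-i}^m(t_{-i})\subseteq\ICR_{-i}^{m-1}(t_{-i})$ while the law of $(k,a_{-i})$ is unchanged. The stabilization and countable-intersection argument for $\ICR_i^\infty$ is also the paper's.

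The genuine difference is how you get measurability of $\ICR_i^m$. The paper simply invokes \citet[Lemma~1]{dekel2007interim}, which is stated for the forecast formulation of \ICRt, and relies on that formulation being equivalent to its own. You instead prove it directly. You write $\ICR_i^{m+1}$ as $\BR_i$ composed with the measurable map $t_i\mapsto(\id\times\ICR_{-i}^m)_\#P_i(t_i)$ into a finite-dimensional simplex. You then show $\BR_i$ is Borel because every $\mathcal I_i(C)$ is closed, being a finite intersection of linear images of compact polytopes.

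That closedness argument is a compactness version of what the paper proves later in Lemma~\ref{Lem_BRpolytope}(i)--(ii) by Fourier--Motzkin elimination. That lemma does not depend on the present one, so there is no circularity. Your route makes the lemma self-contained and tied to the paper's own definition of $\BR_i$, at the cost of a few extra lines; the paper's route is shorter but leans on an external result stated for a different formulation. The step you flagged as the main obstacle is in fact routine.

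The two inductions also need not be run jointly. Measurability comes first, and the monotonicity step uses it only to make the pushforward and the conditioning on $(k,\ICR_{-i}^m)$ well defined.
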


\begin{proof}
Nonemptiness follows because every $B_{-i}\in\mathcal A_{-i}$ is nonempty, so $\mathcal Q_i(p)$ is nonempty for every $p$, and the finite set $A_i$ contains a maximizer.

We prove monotonicity by induction. The claim is immediate at level zero. Suppose that $\ICR_j^m(t_j)\subseteq\ICR_j^{m-1}(t_j)$ for every player $j$ and every $t_j\in T_j$. Fix a player $i$, a type $t_i\in T_i$, and $a_i\in\ICR_i^{m+1}(t_i)$. Let $q^m$ be a compatible joint distribution under which $a_i$ is optimal, and consider the distribution of
$$
(k,\ICR_{-i}^{m-1}(t_{-i}),\ICR_{-i}^m(t_{-i}))
$$
under $P_i(t_i)$. Couple this distribution and $q^m$ conditionally on $(k,\ICR_{-i}^m(t_{-i}))$, and retain the marginal on $(k,\ICR_{-i}^{m-1}(t_{-i}),a_{-i})$. Since $\ICR_{-i}^m(t_{-i})\subseteq\ICR_{-i}^{m-1}(t_{-i})$, the resulting joint distribution is compatible with the level-$(m-1)$ belief and induces the same marginal on $K\times A_{-i}$ as $q^m$. Hence $a_i\in\ICR_i^m(t_i)$.

Measurability of $\ICR_i^m$ follows from \citet[Lemma~1]{dekel2007interim}, stated for the equivalent forecast formulation of \ICRt. Since $A_i$ is finite and
$$
\{t_i:a_i\in\ICR_i^\infty(t_i)\}
=
\bigcap_{m\geq0}
\{t_i:a_i\in\ICR_i^m(t_i)\},
$$
the terminal correspondence is measurable as well. Finally, monotonicity and finiteness of $A_i$ imply that every hierarchy stabilizes, so $\ICR_i^\infty(t_i)$ is nonempty.
\end{proof}

\subsection{Examples}

\paragraph{Example 1: The Electronic-Mail Game.} We first illustrate why terminal \ICRt\ sets do not provide a direct representation. Consider a version of the email game in \citet{rubinstein1989electronic}. There are two states $k_G,k_B$, two players, and two actions $\{a,b\}$ for each player. Payoffs are
\begin{center}\hfill
\begin{game}{2}{2}[$k_G$]
 &  $a$ & $b$ \\
$a$ & $ 1,1$ & $0,0$ \\
$b$ & $0,0$ & $1,1$
\end{game}\hfill~
\end{center}
in state $k_G$ and
\begin{center}\hfill
\begin{game}{2}{2}[$k_B$]
 &  $a$ & $b$ \\
$a$ & -1,-1 & $0,0$ \\
$b$ & $0,0$ & $1,1$
\end{game}\hfill~%
\end{center}%
in state $k_B$. Let $T_1=\{0,2,4,\ldots\}$ and $T_2=\{1,3,5,\ldots\}$. For $\varepsilon<1/2$, define
$$
P(k,t_1,t_2)
=
\begin{cases}
\varepsilon
&\text{if }(k,t_1,t_2)=(k_B,0,1),\\
(1-\varepsilon)^{\min\{t_1,t_2\}}\varepsilon
&\text{if }k=k_G,\ t_1>0,\ |t_1-t_2|=1,\\
0&\text{otherwise.}
\end{cases}
$$
Type $0$ of player 1 knows that the state is $k_B$, so $\ICR_1^1(0)=\{b\}$. Conditional on each subsequent type, the probability of the preceding type in the chain is $1/(2-\varepsilon)>1/2$. Induction along the chain therefore gives $\ICR_i^\infty(t_i)=\{b\}$ for every player and type. The same induction shows that the round at which action $a$ is first eliminated is unbounded along the chain, so the information structure induces infinitely many distinct hierarchies.

Revealing only the terminal set consequently gives every player the same signal $\{b\}$. The signal contains no information, and the probability of $k_B$ is $\varepsilon<1/2$. Action $a$ is then a best response under the compatible distribution assigning probability one to the opponent's action $a$, while $b$ is a best response under the one assigning probability one to $b$. Hence $\ICR_i^1(\{b\})=\{a,b\}$, and the iteration remains at $\{a,b\}$. The terminal-set distribution does not implement itself.

\paragraph{Example 2: Failure of Convex Best-Response Regions.}
We now show that best-response regions need not be convex. Consider a two-player payoff structure with one state $K=\{k\}$. Player 1 has actions $\{a,b,c\}$ and player 2 has actions $\{x,y,z\}$. Player 2's payoffs are irrelevant, and player 1's payoffs are reported in the left panel of Figure~\ref{fig:nonconvex-br}.
Let $p$ be the belief on $K \times \mathcal{A}_2$ that assigns probability one to $(k,\{x,y\})$, let $p'$ assign probability one to $(k,\{x,z\})$, and let $r$ assign probability one to $(k,\{y,z\})$.

\begin{figure}[htbp]
\centering
\begin{minipage}[c]{0.31\textwidth}
\centering
\[
\begin{array}{c|ccc}
 & x & y & z \\
\hline
a & 1 & 1 & 0 \\
b & 1 & 0 & 1 \\
c & \frac{2}{3} & \frac{2}{3} & \frac{2}{3}
\end{array}
\]
\end{minipage}
\hfill
\begin{minipage}[c]{0.65\textwidth}
\centering
\begin{tikzpicture}[x=.92cm,y=.82cm]
  \coordinate (P) at (0,0);
  \coordinate (Pp) at (7,0);
  \coordinate (R) at (3.5,4.2);

  \fill[brown!16] (P)--(Pp)--(R)--cycle;
  \draw[brown!75!black,very thick] (P)--(Pp)--(R)--cycle;

  \fill[red!16]
    (3.5,4.2)--(5.83,1.4)--(4.67,0)--(2.33,0)--(1.17,1.4)--cycle;
  \draw[red!65!black,very thick]
    (3.5,4.2)--(5.83,1.4)--(4.67,0)--(2.33,0)--(1.17,1.4)--cycle;

  \fill[blue!70!black] (P) circle (2.7pt)
    node[below left,font=\scriptsize] {$p$};
  \fill[blue!70!black] (Pp) circle (2.7pt)
    node[below right,font=\scriptsize] {$p'$};
  \fill[blue!70!black] (R) circle (2.7pt)
    node[above,font=\scriptsize] {$r$};
  \fill[red!70!black] (3.5,0) circle (2.7pt)
    node[above right,font=\scriptsize] {$\bar p$};
\end{tikzpicture}
\end{minipage}
\caption{Failure of convexity of a best-response region. The left panel reports player 1's payoffs. The right panel depicts the triangle spanned by $p,p'$, and $r$. In the brown zones, $\BR_1(q)=\{a,b\}$; in the red zone, $\BR_1(q)=\{a,b,c\}$.}
\label{fig:nonconvex-br}
\end{figure}

Under $p$, the compatible distributions over player 2's actions are exactly the distributions supported on $\{x,y\}$. Against any such distribution, action $a$ gives payoff $1$, action $b$ gives payoff at most $1$, and the safe action $c$ gives payoff $2/3$. Hence $a$ is always a best response. Moreover, the compatible distribution assigning probability one to $x$ makes both $a$ and $b$ best responses. Therefore
$\BR_1(p)=\{a,b\}$. Similarly, we have $\BR_1(p')=\{a,b\}$.

Now let
$\bar p=\frac12p+\frac12p'.
$ Under $\bar p$, there is a compatible joint distribution that assigns player 2's action $y$ after the event $\{x,y\}$ and $z$ after the event $\{x,z\}$. Its marginal on player 2's actions assigns probability $1/2$ to $y$ and probability $1/2$ to $z$. The expected payoffs of $a,b,c$ are then $\frac12,\frac12, \frac23$, respectively.

Thus $c$ is a strict best response under a compatible joint distribution for $\bar p$, so $c\in\BR_1(\bar p)$. Consequently,
$$
\BR_1(p)=\BR_1(p')=\{a,b\},
\qquad
\BR_1\!\left(\frac12p+\frac12p'\right)\neq \{a,b\}.
$$
Hence the best-response region $\BR_1^{-1}(\{a,b\})$ is not convex.

\paragraph{Best-response geometry.}
We use the following terminology in Euclidean spaces. A \emph{polyhedron} is a set defined by finitely many weak affine inequalities, and a \emph{mixed polyhedron} is a set defined by finitely many weak and strict affine inequalities. We allow inconsistent systems, so the empty set is both a polyhedron and a mixed polyhedron.

Following \citet{govei2026strategic}, we record three geometric properties of the best-response operator. The closed best-response inclusion regions support the limiting and pooling arguments below, while the representation of best-response regions as finite unions of mixed polyhedra supports the tagged construction. The proof is in Appendix~\ref{app:br-geometry}.

\begin{lemma}\label{Lem_BRpolytope}
For every player $i$, every action $a_i\in A_i$, and every $B_i\in\mathcal A_i$:
\begin{enumerate}[label=(\roman*)]
\item $\mathcal I_i(\{a_i\})$ is a polyhedron;
\item $\mathcal I_i(B_i)$ is a polyhedron;
\item $\BR_i^{-1}(B_i)$ is a finite union of mixed polyhedra.
\end{enumerate}
\end{lemma}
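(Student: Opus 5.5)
The plan is to describe the inclusion regions through projections of linear systems. Fix player $i$ and identify $\Delta(K\times\mathcal A_{-i})$ with a simplex in $\mathbb R^{K\times\mathcal A_{-i}}$. For (i), a belief $p$ lies in $\mathcal I_i(\{a_i\})$ exactly when some $q\in\mathcal Q_i(p)$ makes $a_i$ optimal. We treat $q$ as a vector indexed by triples $(k,B_{-i},a_{-i})$ with $a_{-i}\in B_{-i}$. The conditions on $(p,q)$ are then all weak linear inequalities:
\begin{itemize}
\item nonnegativity of $q$;
\item the marginal equalities $\sum_{a_{-i}\in B_{-i}}q(k,B_{-i},a_{-i})=p(k,B_{-i})$;
\item $p$ in the simplex;
\item the finitely many optimality inequalities $\sum q\,(u_i(k,a_i,a_{-i})-u_i(k,a_i',a_{-i}))\ge0$ for $a_i'\in A_i$.
\end{itemize}
These describe a polyhedron in $(p,q)$-space, and $\mathcal I_i(\{a_i\})$ is its projection onto the $p$-coordinates. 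By Fourier--Motzkin elimination (or the Minkowski--Weyl theorem), the linear projection of a polyhedron is a polyhedron, which gives (i).

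For (ii), the union in the definition of $\BR_i$ lets different actions in $B_i$ be justified by different $q$'s. Hence $B_i\subseteq\BR_i(p)$ holds exactly when $a_i\in\BR_i(p)$ for each $a_i\in B_i$. This gives
$$\mathcal I_i(B_i)=\bigcap_{a_i\in B_i}\mathcal I_i(\{a_i\}),$$
a finite intersection of polyhedra, which is a polyhedron by (i).

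For (iii), use the identity stated before the lemma: $\BR_i^{-1}(B_i)$ equals $\mathcal I_i(B_i)$ minus the union of $\mathcal I_i(B_i\cup\{a_i\})$ over $a_i\notin B_i$. Each removed set is a polyhedron $\{p:\ell_j(p)\ge c_j,\ j\in J_{a_i}\}$, so its complement is the finite union $\bigcup_j\{\ell_j(p)<c_j\}$ of open half-spaces. Intersecting $\mathcal I_i(B_i)$ with these complements for all $a_i\notin B_i$ and distributing intersections over unions yields a finite union. Each term is $\mathcal I_i(B_i)$ cut by one strict inequality per excluded action, hence a mixed polyhedron. If some removed set is all of the simplex, its defining system can be taken to contain only simplex constraints, and the result is empty, which is allowed.

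The only step needing care is (i): we must check that the projection remains a polyhedron, defined by weak inequalities only, and that $q$ ranges over a finite-dimensional space. Both hold because $K$, $\mathcal A_{-i}$ and $A_{-i}$ are finite, and the optimality constraints are linear in $q$ for a fixed comparison action. The remaining steps are set-theoretic bookkeeping.
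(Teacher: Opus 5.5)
Your proposal is correct and follows essentially the same route as the paper's proof. For (i) you project the linear system in $(p,q)$ onto the $p$-coordinates by Fourier--Motzkin elimination, for (ii) you intersect over $a_i\in B_i$, and for (iii) you write the complements of the subtracted inclusion regions as unions of strict half-spaces and distribute intersections over those finite unions.
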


\section{Direct Representation under \texorpdfstring{$\BR$}{BR}-Convexity}\label{sec:brconvex}

The first example shows that terminal \ICRt\ sets discard information needed to reproduce \ICRt, while the second identifies the obstruction to using full hierarchies as direct types: averaging beliefs within a best-response region may leave that region. This section gives a complete answer when best-response regions are convex. Ordinary \ICRt\ hierarchies are direct types and form a countable space. After establishing the direct representation, we use monotonicity of best responses to reduce obedience to the entries and exits of constant stretches of the hierarchy. We then identify binary-action and ordered payoff structures for which the relevant best-response regions admit explicit linear descriptions.

\begin{definition}[$\BR$-convex and polyhedral payoff structures]
The payoff structure is \emph{$\BR$-convex} if $\BR_i^{-1}(B_i)$ is convex for every player $i$ and every $B_i\in\mathcal A_i$. It is \emph{polyhedral} if $\BR_i^{-1}(B_i)$ is a mixed polyhedron for every player $i$ and every $B_i\in\mathcal A_i$.
\end{definition}

\begin{remark}
Every polyhedral payoff structure is $\BR$-convex.
\end{remark}

Every common prior $P$ induces, by pushing forward under $\id\times\ICR$, the distribution
$$
P_{\ICR}
\coloneqq
(\id\times\ICR)_\#P
$$
on $K\times\mathcal A^{\naturals}$. Let $\mathcal P$ denote the set of distributions obtained in this way as the common prior and the type spaces vary.

\begin{definition}[\ICRt\,hierarchies]
The set of positive-probability \ICRt\ hierarchy profiles is
$$
S
\coloneqq
\left\{
s\in\mathcal A^{\naturals}:
P(K\times\{s\})>0
\text{ for some }P\in\mathcal P
\right\}.
$$
For every player $i$, let $S_i$ be the projection of $S$ on $\mathcal A_i^{\naturals}$.
\end{definition}

The space $S$ is countable. Indeed, every player's \ICRt\,hierarchy is a decreasing sequence of nonempty subsets of the finite set $A_i$. It therefore has finitely many strict decreases and is determined by a finite decreasing chain of action sets together with the finite sequence of rounds at which the strict decreases occur. There are finitely many such chains and countably many such sequences of rounds.

Every induced distribution $P_\ICR\in\mathcal P$ can itself be viewed as a common prior: $(k,s)$ is drawn according to $P_\ICR$, and player $i$ observes the hierarchy $s_i$. The representation is direct when recomputing \ICRt\ under this common prior reproduces the announced hierarchies, so $\ICR(s)=s$ almost surely. The following theorem characterizes all induced state--hierarchy distributions.

\begin{theorem}[Direct Representation by \ICRt\ Hierarchies]\label{RP1}
For any $\BR$-convex payoff structure and any $P \in \Delta(K\times \mathcal{A}^{\naturals})$, the following conditions are equivalent:
\begin{enumerate}
    \item $P \in \mathcal{P}$;
    \item When viewed as a common prior on hierarchy types, $P$ satisfies $\ICR(s)=s$ almost surely;
    \item $P(s^0{=}A) = 1$ and $P$ satisfies the family of \emph{obedience constraints}:
\begin{equation}\label{OC}
    s_i^m
    =
    \BR_i\left(
\marg_{k,s_{-i}^{m-1}}P(\cdot,\cdot\mid s_i)
    \right)
    \quad
    \text{for every $m\geq1$ and almost every $s_i$.}
\end{equation}
\end{enumerate}
\end{theorem}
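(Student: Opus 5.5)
I will prove the cycle $1\Rightarrow3\Rightarrow2\Rightarrow1$. Two implications are short. For $2\Rightarrow1$: $P$ is itself a common prior on the standard Borel type spaces $T_i=\mathcal A_i^{\naturals}$, which are countable products of finite sets. Under this prior, $\ICR=\id$ almost surely, so $(\id\times\ICR)_\#P=P$ and hence $P\in\mathcal P$.

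For $3\Rightarrow2$: viewing $P$ as a common prior on hierarchy types, I will show by induction on $m$ that $\ICR_i^m(s_i)=s_i^m$ for almost every $s_i$, simultaneously for all players. The base case $m=0$ is the condition $P(s^0{=}A)=1$. For the step, suppose $\ICR_{-i}^{m-1}(s_{-i})=s_{-i}^{m-1}$ almost surely. Then for almost every $s_i$,
$$(\id\times\ICR_{-i}^{m-1})_\#P(\cdot\mid s_i)=\marg_{k,s_{-i}^{m-1}}P(\cdot,\cdot\mid s_i).$$
Here I pass from a $P$-null set to a null set under almost every conditional, using the disintegration. Applying \eqref{ICR} and \eqref{OC} then gives $\ICR_i^m(s_i)=s_i^m$. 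Taking a countable union of null sets over $m$ and $i$ yields $\ICR(s)=s$ almost surely. Neither implication uses $\BR$-convexity.

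The main implication is $1\Rightarrow3$. Let $P_\ICR=(\id\times\ICR)_\#Q$ for a common prior $Q$ on $K\times T$. The level-zero condition is immediate. Fix $i$ and $m\ge1$. For $Q$-almost every $t_i$, the level-$(m-1)$ belief is
$$p_{t_i}:=(\id\times\ICR_{-i}^{m-1})_\#Q_i(t_i),$$
which satisfies $\BR_i(p_{t_i})=\ICR_i^m(t_i)$. Since $s_i=\ICR_i(t_i)$, the conditional of $P_\ICR$ given $s_i$, marginalized on $(k,s_{-i}^{m-1})$, is the average of $p_{t_i}$ over $t_i$ distributed according to the conditional law of $t_i$ given $\ICR_i(t_i)=s_i$. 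Formally, this follows from the tower property: $(k,\ICR_{-i}^{m-1}(t_{-i}))$ is a measurable function of $(k,t_{-i})$, and $\sigma(\ICR_i)\subseteq\sigma(t_i)$. Every $t_i$ in this conditioning set has $p_{t_i}\in\BR_i^{-1}(s_i^m)$.

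The main obstacle is passing from convexity to the barycenter. $\BR$-convexity gives closure under finite convex combinations only, but here we average under an arbitrary probability measure over a possibly uncountable set of types. Since $S_i$ is countable, I will disintegrate $Q$ along the countably many atoms $\{t_i:\ICR_i(t_i)=s_i\}$ with positive probability. The barycenter of a probability measure concentrated on a convex subset $C$ of a finite-dimensional simplex lies in $C$. I plan to prove this by induction on the dimension of the affine hull of $C$. The barycenter lies in the relative interior of the convex hull of the support, unless the measure is concentrated on a proper face. In the latter case, restrict to the intersection of $C$ with that face, which is convex and of lower dimension. In the former case, the barycenter lies in the relative interior of $C$, which is contained in $C$.

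Alternatively, Lemma~\ref{Lem_BRpolytope} states that $\BR_i^{-1}(B_i)$ is a finite union of mixed polyhedra; combined with convexity, this region is a convex set described by affine inequalities. I expect the strict inequalities to cause the only subtlety. They are handled by noting that the barycenter satisfies a strict inequality unless the measure is concentrated on the corresponding boundary hyperplane, which is a face. This gives $\BR_i(\marg P(\cdot\mid s_i))=s_i^m$, which is \eqref{OC}.
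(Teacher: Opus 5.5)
Your proof is correct and follows the paper's route. You use the same cycle $1\Rightarrow3\Rightarrow2\Rightarrow1$, obtaining $1\Rightarrow3$ by writing the direct belief as the conditional average of the original beliefs in the convex region $\BR_i^{-1}(s_i^m)$, and your supporting-hyperplane/dimension-induction argument (a barycenter of a measure concentrated on a convex subset of a finite-dimensional simplex stays in it) justifies a step the paper only asserts. Drop the ``alternative'' via Lemma~\ref{Lem_BRpolytope}: a convex finite union of mixed polyhedra need not be a mixed polyhedron (e.g.\ $\{x,y\geq0,\ x+y<1\}\cup\{(t,1-t):0\leq t\leq\frac12\}$), so that route would cover only polyhedral payoff structures, not all $\BR$-convex ones.
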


\begin{proof}

(1. $\implies$ 3) Since $P\in\mathcal P$, there is a common prior $Q\in\Delta(K\times T)$ that induces $P$. Denote player $i$'s regular conditional belief under $Q$ by $Q_i(t_i)$. Since $\ICR_i^0(t_i)=A_i$ for every player $i$ and every type $t_i$, the induced distribution $P$ satisfies $P(s^0=A)=1$. Moreover, for every player $i$, every $t_i\in T_i$, and every $m\geq1$, the definition of the ICR iteration gives
\begin{equation}
\ICR_i^m(t_i)
=
\BR_i\left(
(\id\times\ICR_{-i}^{m-1})_\#Q_i(t_i)
\right).
\end{equation}
For every $B_i\in\mathcal A_i$, the inverse image $\BR_i^{-1}(B_i)\subseteq\Delta(K\times\mathcal A_{-i})$ is convex because the payoff structure is $\BR$-convex. Fix a player $i$, a round $m\geq1$, and a hierarchy $s_i$ of positive marginal probability. For almost every $t_i$ conditional on $\ICR_i(t_i)=s_i$, the belief
$$
q_i^m(t_i)
\coloneqq
(\id\times\ICR_{-i}^{m-1})_\#Q_i(t_i)
$$
belongs to $\BR_i^{-1}(s_i^m)$. Moreover, the law of iterated expectations gives
\begin{equation}
\marg_{k,s_{-i}^{m-1}}P(\cdot\mid s_i)
=
\int q_i^m(t_i)\,Q(dt_i\mid\ICR_i=s_i).
\end{equation}
The right-hand side is an average of beliefs in the convex set $\BR_i^{-1}(s_i^m)$ and therefore also belongs to this set. Hence
$$
s_i^m
=
\BR_i\left(
\marg_{k,s_{-i}^{m-1}}P(\cdot\mid s_i)
\right).
$$
Thus $P$ satisfies \eqref{OC}.

(3. $\implies$ 2) Regard $P$ as a common prior on $K\times\mathcal A^{\naturals}$ under which player $i$ observes the announced hierarchy $s_i$. The level-zero condition implies that the ICR hierarchy generated by $s_i$ starts at $s_i^0=A_i$. Suppose that its coordinates through round $m-1$ coincide with $(s_i^0,\ldots,s_i^{m-1})$. By \eqref{ICR} and \eqref{OC}, its coordinate at round $m$ is $s_i^m$. Induction yields $\ICR(s)=s$ $P$-almost surely.

(2. $\implies$ 1) Regard $P$ as the same common prior on hierarchy types. If $\ICR(s)=s$ almost surely, then the distribution induced by this common prior over states and ICR hierarchies is $P$ itself. Hence $P\in\mathcal P$.
\end{proof}

Condition \emph{2} is the typewise identity property defining directness. Condition \emph{3} expresses it through level-by-level obedience constraints. The equivalence of these conditions with being induced has the familiar revelation-principle interpretation.

\subsection{Obedience at Transitions}
\label{sec:number-oc}

The level-by-level characterization in Theorem~\ref{RP1} appears to require infinitely many obedience constraints for every hierarchy. Most of these constraints are redundant. Let
$$
\mathcal R_i(s_i)\coloneqq\{s_i^m:m\geq1\}
$$
be the collection of action sets reached at positive rounds. For every $B_i\in\mathcal R_i(s_i)$, the \emph{$B_i$-plateau} is the maximal interval of positive rounds on which $s_i^m=B_i$. Define its entry and exit rounds by
$$
\underline m_i(B_i;s_i)
\coloneqq
\min\{m\geq1:s_i^m=B_i\},
\qquad
\overline m_i(B_i;s_i)
\coloneqq
\sup\{m\geq1:s_i^m=B_i\}
\in\naturals\cup\{\infty\}.
$$
Thus $\overline m_i(B_i;s_i)=\infty$ precisely for the terminal \ICRt\ set
$$
s_i^\infty\coloneqq\bigcap_{m\geq0}s_i^m.
$$
For $P\in\Delta(K\times S)$, define
$$
p_i^m(s_i)
\coloneqq
\marg_{k,s_{-i}^{m-1}}P(\cdot,\cdot\mid s_i),
\qquad m\geq1,
$$
and
$$
p_i^\infty(s_i)
\coloneqq
\marg_{k,s_{-i}^\infty}
P(\cdot,\cdot\mid s_i).
$$

\begin{lemma}[Obedience at transitions]
\label{lem:transition-obedience-ordinary}
Fix $P\in\Delta(K\times S)$, a player $i$, and a hierarchy $s_i$ of positive marginal probability. Suppose that, for every $B_i\in\mathcal R_i(s_i)$,
$$
\BR_i\left(
p_i^{\underline m_i(B_i;s_i)}(s_i)
\right)
=B_i,
$$
that, for every $B_i\in\mathcal R_i(s_i)$ such that $\overline m_i(B_i;s_i)<\infty$,
$$
\BR_i\left(
p_i^{\overline m_i(B_i;s_i)}(s_i)
\right)
=B_i,
$$
and that
$$
s_i^\infty
\subseteq
\BR_i\bigl(p_i^\infty(s_i)\bigr).
$$
Then the level-by-level obedience constraints hold at $s_i$.
\end{lemma}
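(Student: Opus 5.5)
The plan is to show that, for a fixed hierarchy $s_i$, the sequence $m\mapsto\BR_i\bigl(p_i^m(s_i)\bigr)$ is weakly decreasing in $m$ and bounded below by $\BR_i\bigl(p_i^\infty(s_i)\bigr)$. Once this is in place, the hypotheses at the entry and exit of each plateau pin down $\BR_i(p_i^m(s_i))$ at every round of that plateau.

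\emph{Step 1 (refinement monotonicity).} I would first isolate the coupling argument from the proof of Lemma~\ref{lem:ICR-properties} as a general statement. Let $\pi$ be a distribution on $K\times\mathcal A_{-i}\times\mathcal A_{-i}$ under which the second set profile is almost surely contained in the first, coordinatewise. Let $p^{\rm out}$ and $p^{\rm in}$ denote its marginals on $(k,\text{first})$ and $(k,\text{second})$. Then
$$
\BR_i(p^{\rm in})\subseteq\BR_i(p^{\rm out}).
$$
To prove it, take $a_i\in\BR_i(p^{\rm in})$ and a compatible $q\in\mathcal Q_i(p^{\rm in})$ under which $a_i$ is optimal. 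Glue $q$ and $\pi$ conditionally on $(k,\text{second})$, and keep the marginal on $(k,\text{first},a_{-i})$. Since $a_{-i}$ lies in the second set, which is contained in the first, this marginal lies in $\mathcal Q_i(p^{\rm out})$. It also has the same $(k,a_{-i})$-marginal as $q$, so $a_i$ remains optimal under it. All sets involved are finite, so the gluing is elementary.

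\emph{Step 2 (monotone chain).} Since $P$ is supported on $K\times S$ and hierarchies in $S$ are decreasing, we have $s_{-i}^{\infty}\subseteq s_{-i}^{m}\subseteq s_{-i}^{m-1}$ almost surely. I would apply Step 1 to the conditional joint law of $(k,s_{-i}^{m-1},s_{-i}^{m})$ given $s_i$. This gives
$$
\BR_i(p_i^{m+1}(s_i))\subseteq\BR_i(p_i^{m}(s_i))\quad\text{for every } m\geq1.
$$
Applying Step 1 to the conditional law of $(k,s_{-i}^{m-1},s_{-i}^\infty)$ gives
$$
\BR_i(p_i^\infty(s_i))\subseteq\BR_i(p_i^m(s_i))\quad\text{for every } m.
$$

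\emph{Step 3 (plateaus).} Because $s_i$ is decreasing with values in a finite set, the positive rounds are partitioned into finitely many plateaus, each an interval $[\underline m_i(B_i;s_i),\overline m_i(B_i;s_i)]$.
\begin{itemize}
\item For a plateau with finite exit, the decreasing sequence $\BR_i(p_i^m(s_i))$ equals $B_i$ at both endpoints by hypothesis. It is therefore squeezed to equal $B_i=s_i^m$ throughout the interval.
\item For the terminal plateau, $B_i=s_i^\infty$. For $m\geq\underline m_i$, monotonicity together with the entry condition gives $\BR_i(p_i^m(s_i))\subseteq s_i^\infty$. Step 2 and the terminal hypothesis give
$$
s_i^\infty\subseteq\BR_i(p_i^\infty(s_i))\subseteq\BR_i(p_i^m(s_i)).
$$
Hence equality holds.
\end{itemize}
This yields \eqref{OC} at $s_i$ for every $m\geq1$.

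The main obstacle is Step 1, namely verifying that the glued distribution is compatible and preserves optimality. Because the argument only reproduces the coupling already used in Lemma~\ref{lem:ICR-properties}, it should go through directly. The only other care needed is to ensure that the almost-sure inclusions between opponents' levels hold conditionally on $s_i$, which follows from $P$ being supported on $K\times S$ and $s_i$ having positive probability.
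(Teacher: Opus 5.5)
Your proof is correct and follows the paper's argument. Like the paper, you sandwich $\BR_i(p_i^m(s_i))$ between its values at the plateau's entry and its exit (or terminal belief), using monotonicity of $\BR_i$ with respect to almost surely nested opponent action sets. The only difference is that you prove this monotonicity directly via the gluing argument from Lemma~\ref{lem:ICR-properties}, whereas the paper cites \citet[Lemma~4.1]{govei2026strategic}, so your version is self-contained.
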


\begin{proof}
Fix $B_i\in\mathcal R_i(s_i)$, write $\ell=\underline m_i(B_i;s_i)$ and $r=\overline m_i(B_i;s_i)$, and let $\ell\leq m\leq r$. Under $P(\cdot\mid s_i)$, the opponent action sets defining $p_i^\ell(s_i)$, $p_i^m(s_i)$, and $p_i^r(s_i)$ are coupled by the same underlying hierarchy and are almost surely nested, where $p_i^r=p_i^\infty$ when $r=\infty$. Hence \citet[Lemma~4.1]{govei2026strategic} gives
$$
\BR_i(p_i^\ell(s_i))
\supseteq
\BR_i(p_i^m(s_i))
\supseteq
\BR_i(p_i^r(s_i)).
$$
If $r<\infty$, both outer terms equal $B_i$. If $r=\infty$, the first equals $B_i$ and the terminal condition implies that the last contains $B_i$. All inclusions are therefore equalities.
\end{proof}

\begin{corollary}[Transition obedience for ordinary hierarchies]
\label{cor:ordinary-transition-oc}
Suppose that the payoff structure is $\BR$-convex. A distribution $P\in\Delta(K\times S)$ belongs to $\mathcal P$ if and only if it satisfies the level-zero condition and, for every player and almost every hierarchy, the entry, finite-exit, and terminal conditions of Lemma~\ref{lem:transition-obedience-ordinary}.
\end{corollary}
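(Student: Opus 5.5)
The plan is to combine Theorem~\ref{RP1} with Lemma~\ref{lem:transition-obedience-ordinary}. By Theorem~\ref{RP1}, under $\BR$-convexity, $P\in\mathcal P$ is equivalent to the level-zero condition together with the level-by-level obedience constraints \eqref{OC}. It therefore suffices to show that, given $P(s^0{=}A)=1$, the full family \eqref{OC} at almost every $s_i$ is equivalent to the entry, finite-exit, and terminal conditions at that $s_i$. Here $P$ may be taken in $\Delta(K\times S)$: every element of $\mathcal P$ is supported on the countable set $S$, and the converse direction only uses that $P$ lives on hierarchy profiles.

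For the direction ``transition conditions $\Rightarrow$ $P\in\mathcal P$'', I apply Lemma~\ref{lem:transition-obedience-ordinary} directly. For every player $i$ and almost every $s_i$ of positive marginal probability, the lemma yields $s_i^m=\BR_i(p_i^m(s_i))$ for every $m$ lying in some plateau. Every positive round $m$ lies in the plateau of $B_i=s_i^m\in\mathcal R_i(s_i)$, since plateaus are maximal constant intervals of rounds. Hence \eqref{OC} holds for all $m\geq1$, and Theorem~\ref{RP1} gives $P\in\mathcal P$. Because $S$ is countable, ``almost every'' reduces to ``every hierarchy with positive marginal probability''.

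For the converse, assume $P\in\mathcal P$. Theorem~\ref{RP1} gives \eqref{OC}, so the entry and finite-exit conditions are simply instances of \eqref{OC} at $m=\underline m_i(B_i;s_i)$ and $m=\overline m_i(B_i;s_i)<\infty$.

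The step I expect to need care is the terminal condition $s_i^\infty\subseteq\BR_i(p_i^\infty(s_i))$, which is not literally an instance of \eqref{OC}. My approach is as follows.
\begin{enumerate}
\item Each opponent hierarchy stabilizes at a finite round. Hence, conditional on $s_i$, the random set $s_{-i}^{m-1}$ converges almost surely to $s_{-i}^\infty$, and since $\mathcal A_{-i}$ is finite, $p_i^m(s_i)\to p_i^\infty(s_i)$ in total variation.
\item For $m$ beyond the entry round of the terminal plateau, \eqref{OC} gives $s_i^\infty=\BR_i(p_i^m(s_i))$, so $p_i^m(s_i)\in\mathcal I_i(s_i^\infty)$ for all large $m$.
\item By Lemma~\ref{Lem_BRpolytope}(ii), $\mathcal I_i(s_i^\infty)$ is a polyhedron and hence closed. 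The limit $p_i^\infty(s_i)$ therefore lies in $\mathcal I_i(s_i^\infty)$, which is exactly the terminal condition.
\end{enumerate}
A fallback for the last step is the monotonicity result \citet[Lemma~4.1]{govei2026strategic}, but that only gives $\BR_i(p_i^\infty)\subseteq\BR_i(p_i^m)$, the wrong direction. So the closedness argument is the one I would commit to.
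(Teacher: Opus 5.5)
Your proposal is correct and follows the paper's proof. Sufficiency comes from Lemma~\ref{lem:transition-obedience-ordinary} combined with Theorem~\ref{RP1}. For necessity, the entry and finite-exit conditions are instances of \eqref{OC}, and the terminal condition follows from total-variation convergence $p_i^m(s_i)\to p_i^\infty(s_i)$ together with closedness of a best-response inclusion region from Lemma~\ref{Lem_BRpolytope}. You use closedness of $\mathcal I_i(s_i^\infty)$ via part (ii) rather than of the sets $\mathcal I_i(\{a_i\})$ via part (i), which is an immaterial variation; your explicit check that every positive round lies in some plateau supplies a detail the paper leaves implicit.
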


\begin{proof}
The endpoint conditions imply all level-by-level obedience constraints by Lemma~\ref{lem:transition-obedience-ordinary}, so sufficiency follows from Theorem~\ref{RP1}. Conversely, level-by-level obedience implies the entry and finite-exit conditions. Moreover, $p_i^m(s_i)\to p_i^\infty(s_i)$ in total variation. Every action in $s_i^\infty$ is a best response at every finite round and remains a best response at the limit by Lemma~\ref{Lem_BRpolytope}(i).
\end{proof}

Every hierarchy contains at most $|A_i|$ plateaus. If it contains $q_i$ plateaus, the transition characterization uses $q_i$ entry conditions, $q_i-1$ finite-exit conditions, and one terminal condition. Thus at most $2q_i\leq2|A_i|$ nontrivial obedience conditions are required for each ordinary hierarchy, in addition to the level-zero condition.

By Lemma~\ref{Lem_BRpolytope}, each entry and finite-exit condition is equivalent to a finite disjunction of systems of weak and strict linear inequalities. The terminal condition is membership in the best-response inclusion region $\mathcal I_i(s_i^\infty)$ and is therefore a finite system of weak linear inequalities. For polyhedral payoff structures, each entry and finite-exit condition is a single system of weak and strict linear inequalities.

We next identify two natural classes of polyhedral payoff structures. With binary actions, best-response regions are described by two linear payoff-difference functionals. With ordered actions, interval best-response sets yield mixed-polyhedral regions, and concavity in own action together with increasing differences provides sufficient conditions for the interval property.

\subsection{Binary-Action Payoff Structures}\label{sec:binary}

A finite payoff structure is \emph{binary-action} if every player has exactly two actions. The finite state space $K$ is otherwise unrestricted.
\begin{proposition}[Binary-action payoff structures]
Every binary-action payoff structure is polyhedral, and thus $\BR$-convex.
\end{proposition}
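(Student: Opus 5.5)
Write $A_i=\{a_i,b_i\}$. The set $\mathcal A_i$ then has only three elements: $\{a_i\}$, $\{b_i\}$ and $\{a_i,b_i\}$. So it suffices to show that each of the three best-response regions is a mixed polyhedron in $\Delta(K\times\mathcal A_{-i})$.

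The plan is to reduce $\BR_i$ to two linear functionals. For a belief $p$ and a compatible $q\in\mathcal Q_i(p)$, let
\[
D(q)=\mathbb E_q\bigl[u_i(k,a_i,a_{-i})-u_i(k,b_i,a_{-i})\bigr].
\]
Then $a_i\in\arg\max$ under $q$ iff $D(q)\geq0$, and $b_i\in\arg\max$ under $q$ iff $D(q)\leq0$. Define
\[
\overline D(p)=\max_{q\in\mathcal Q_i(p)}D(q),\qquad \underline D(p)=\min_{q\in\mathcal Q_i(p)}D(q).
\]
The compatibility constraint acts separately on each atom $(k,B_{-i})$: the conditional law of $a_{-i}$ can be any distribution on $B_{-i}$. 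Hence the optimum is attained pointwise, and
\[
\overline D(p)=\sum_{k,B_{-i}}p(k,B_{-i})\max_{a_{-i}\in B_{-i}}\bigl[u_i(k,a_i,a_{-i})-u_i(k,b_i,a_{-i})\bigr].
\]
The analogous formula with $\min$ gives $\underline D(p)$. Both are linear in $p$, because the inner $\max$ and $\min$ are constants depending only on $(k,B_{-i})$. This pointwise decomposition is the one step that needs care, and I expect it to be the main obstacle. The key observation is that $\mathcal Q_i(p)$ is the product over atoms of the simplices $\Delta(B_{-i})$, weighted by $p$. Since $\mathcal Q_i(p)$ is convex and $D$ is linear in $q$, the image $D(\mathcal Q_i(p))$ is the whole interval $[\underline D(p),\overline D(p)]$.

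The regions then follow directly. $a_i\in\BR_i(p)$ iff some compatible $q$ has $D(q)\geq0$, i.e.\ iff $\overline D(p)\geq0$. Likewise $b_i\in\BR_i(p)$ iff $\underline D(p)\leq0$. Since $\BR_i(p)$ is nonempty, this gives:
\[
\BR_i^{-1}(\{a_i\})=\{p:\underline D(p)>0\},\qquad \BR_i^{-1}(\{b_i\})=\{p:\overline D(p)<0\},
\]
\[
\BR_i^{-1}(\{a_i,b_i\})=\{p:\overline D(p)\geq0,\ \underline D(p)\leq0\}.
\]
Each set is cut out of the simplex, which is itself described by finitely many weak affine inequalities and one equality, by at most two affine inequalities, some strict. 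So each region is a mixed polyhedron.

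This shows the payoff structure is polyhedral. $\BR$-convexity then follows from the Remark, since mixed polyhedra are convex. As a consistency check, the description agrees with Lemma~\ref{Lem_BRpolytope}: $\mathcal I_i(\{a_i\})=\{\overline D\geq0\}$ and $\mathcal I_i(\{b_i\})=\{\underline D\leq0\}$ are polyhedra.
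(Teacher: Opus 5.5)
Your proof is correct and follows essentially the same route as the paper. The paper defines the payoff difference $d_i(k,a_{-i})$ and the linear functionals $L_i^{+}(p)$, $L_i^{-}(p)$, obtained by maximizing or minimizing that difference over $B_{-i}$ atom by atom; these are exactly your $\overline D$ and $\underline D$, and from them the paper reads off the same three mixed-polyhedral best-response regions.
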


\begin{proof}
Fix a player $i$ and write $A_i=\{a_i,b_i\}$. For every $(k,a_{-i})\in K\times A_{-i}$, let
$$
d_i(k,a_{-i})
\coloneqq
u_i(k,a_i,a_{-i})-u_i(k,b_i,a_{-i}).
$$
For every $p\in\Delta(K\times\mathcal A_{-i})$, define
$$
L_i^+(p)
\coloneqq
\sum_{(k,B_{-i})\in K\times\mathcal A_{-i}}
p(k,B_{-i})
\max_{a_{-i}\in B_{-i}}d_i(k,a_{-i})
$$
and
$$
L_i^-(p)
\coloneqq
\sum_{(k,B_{-i})\in K\times\mathcal A_{-i}}
p(k,B_{-i})
\min_{a_{-i}\in B_{-i}}d_i(k,a_{-i}).
$$
These functionals are linear in $p$, and $L_i^-(p)\leq L_i^+(p)$.

For every $q\in\mathcal Q_i(p)$,
$$
\mathbb E_q
\left[
u_i(\cdot,a_i,\cdot)-u_i(\cdot,b_i,\cdot)
\right]
=
\sum_{k,B_{-i},a_{-i}}
q(k,B_{-i},a_{-i})d_i(k,a_{-i}).
$$
Compatibility restricts the conditional distribution of $a_{-i}$ at each $(k,B_{-i})$ only to have support in $B_{-i}$. Hence the maximum of this expression over $\mathcal Q_i(p)$ is $L_i^+(p)$, while its minimum is $L_i^-(p)$. Since player $i$ has only two actions,
$$
a_i\in\BR_i(p)
\quad\Longleftrightarrow\quad
L_i^+(p)\geq0,
$$
and
$$
b_i\in\BR_i(p)
\quad\Longleftrightarrow\quad
L_i^-(p)\leq0.
$$
It follows that
$$
\BR_i^{-1}(\{a_i\})
=
\{p:L_i^-(p)>0\},
$$
$$
\BR_i^{-1}(\{b_i\})
=
\{p:L_i^+(p)<0\},
$$
and
$$
\BR_i^{-1}(\{a_i,b_i\})
=
\{p:L_i^-(p)\leq0,\ L_i^+(p)\geq0\}.
$$
Each best-response region is therefore a mixed polyhedron. Hence the payoff structure is polyhedral, and thus $\BR$-convex.
\end{proof}

\subsection{Interval Payoff Structures}\label{sec:interval}

Suppose that every action set $A_i$ is totally ordered. For every
$B_i\in\mathcal A_i$, write $B_i$ as an \emph{interval} if
$$
B_i
=
\{a_i\in A_i:\underline a_i\leq a_i\leq\overline a_i\}
$$
for some $\underline a_i,\overline a_i\in A_i$.

For every player $i$ and every $p_i\in\Delta(K\times\mathcal A_{-i})$, define the maximal and minimal compatible joint distributions by
$$
\overline q_i(p_i)(k,B_{-i},a_{-i})
=
p_i(k,B_{-i})\mathbf 1_{\{a_{-i}=\max B_{-i}\}},
\qquad
\underline q_i(p_i)(k,B_{-i},a_{-i})
=
p_i(k,B_{-i})\mathbf 1_{\{a_{-i}=\min B_{-i}\}},
$$
where maxima and minima are taken coordinatewise.

\begin{definition}[Interval best-response sets]
The payoff structure has \emph{interval best-response sets} if, for every
player $i$ and every $p_i\in\Delta(K\times\mathcal A_{-i})$:
\begin{enumerate}[label=(\roman*)]
\item $\BR_i(p_i)$ is an interval;
\item
$$
\max\BR_i(p_i)
\in
\arg\max_{a_i\in A_i}
\mathbb E_{\overline q_i(p_i)}
u_i(\cdot,a_i,\cdot),
$$
and
$$
\min\BR_i(p_i)
\in
\arg\max_{a_i\in A_i}
\mathbb E_{\underline q_i(p_i)}
u_i(\cdot,a_i,\cdot).
$$
\end{enumerate}
\end{definition}

\begin{proposition}[Interval best-response sets]
If the payoff structure has interval best-response sets, then it is polyhedral, and thus $\BR$-convex.
\end{proposition}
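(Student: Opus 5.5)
The idea is to write each best-response region $\BR_i^{-1}(B_i)$ as an intersection of finitely many sets cut out by linear inequalities in $p$. We do this by tying membership of the extreme actions of an interval to optimality against the two extremal compatible distributions $\overline q_i(p)$ and $\underline q_i(p)$. Both maps $p\mapsto\overline q_i(p)$ and $p\mapsto\underline q_i(p)$ are linear in $p$. Hence, for fixed $a_i$, the expected payoffs $\overline U_i(a_i;p)\coloneqq\mathbb E_{\overline q_i(p)}u_i(\cdot,a_i,\cdot)$ and $\underline U_i(a_i;p)\coloneqq\mathbb E_{\underline q_i(p)}u_i(\cdot,a_i,\cdot)$ are linear functionals of $p$.

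Fix an interval $B_i=[\underline a_i,\overline a_i]$. By property (i), a non-interval $B_i$ has empty region, and the empty set is a mixed polyhedron. We claim that
$$
\BR_i^{-1}(B_i)=\bigl\{p:\max\BR_i(p)=\overline a_i,\ \min\BR_i(p)=\underline a_i\bigr\}.
$$
This follows because, under (i), $\BR_i(p)$ is the interval determined by its endpoints. It therefore suffices to show that $\{p:\max\BR_i(p)=\overline a_i\}$ is a mixed polyhedron, and symmetrically for the minimum.

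The key step is the following characterization, which we aim to prove:
$$
\max\BR_i(p)=\overline a_i
\iff
\overline U_i(\overline a_i;p)\geq\overline U_i(a_i;p)\ \forall a_i,
\quad\text{and}\quad
\overline U_i(\overline a_i;p)>\overline U_i(a_i;p)\ \forall a_i>\overline a_i .
$$
That is, $\max\BR_i(p)$ should be the largest maximizer against $\overline q_i(p)$.

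\emph{Forward direction.} Property (ii) gives that $\overline a_i$ maximizes $\overline U_i(\cdot;p)$. Since $\overline q_i(p)\in\mathcal Q_i(p)$, every maximizer against $\overline q_i(p)$ lies in $\BR_i(p)$. So no $a_i>\overline a_i$ can be a maximizer, which yields the strict inequalities.

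\emph{Converse.} Suppose the inequalities hold. Then $\overline a_i\in\BR_i(p)$, so $\max\BR_i(p)\geq\overline a_i$. By (ii), $\max\BR_i(p)$ maximizes $\overline U_i(\cdot;p)$, and the strict inequalities forbid any maximizer above $\overline a_i$. Hence $\max\BR_i(p)=\overline a_i$.

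The minimum is handled the same way with $\underline q_i$ and strict inequalities for $a_i<\underline a_i$. Intersecting the two systems together with the simplex constraints on $p$ gives finitely many weak and strict linear inequalities. This shows that $\BR_i^{-1}(B_i)$ is a mixed polyhedron, so the payoff structure is polyhedral, and therefore $\BR$-convex by the Remark.

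The main obstacle, though a mild one, is the converse direction: showing that no action outside $[\underline a_i,\overline a_i]$ enters $\BR_i(p)$ through some other compatible $q$. This is precisely where the interval property (i) and the endpoint property (ii) together do the work, since they allow membership in $\BR_i(p)$, a union over all compatible $q$, to be read off from the two extremal distributions alone.
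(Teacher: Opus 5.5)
Your proof is correct. Its core matches the paper's: the strict inequalities against $\overline q_i(p)$ for actions above $\overline a_i$, and against $\underline q_i(p)$ for actions below $\underline a_i$, exclude everything outside $B_i$ through the endpoint property (ii).

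The two arguments differ in how they certify $B_i\subseteq\BR_i(p)$. The paper imposes $p\in\mathcal I_i(B_i)$ and appeals to Lemma~\ref{Lem_BRpolytope}(ii), whose polyhedrality comes from Fourier--Motzkin elimination of the compatible distribution. You instead impose only the weak inequalities making $\overline a_i$ optimal against $\overline q_i(p)$ and $\underline a_i$ optimal against $\underline q_i(p)$. The interval property (i) then fills in the interior of $B_i$.

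Your route gives a fully explicit description with no projection step: every inequality compares payoffs at the extremal opponent profiles $\max B_{-i}$ and $\min B_{-i}$. It also exhibits the region as the intersection of $\{p:\max\BR_i(p)=\overline a_i\}$ and $\{p:\min\BR_i(p)=\underline a_i\}$, each a mixed polyhedron on its own.

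The paper's route is shorter once the lemma is available. Because $\mathcal I_i(B_i)$ certifies every action of $B_i$ directly, its converse for interval $B_i$ needs only (ii), and it invokes (i) only to dispose of non-interval sets.
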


\begin{proof}
Fix a player $i$ and an interval
$$
B_i
=
\{a_i\in A_i:\underline a_i\leq a_i\leq\overline a_i\}.
$$
We claim that $\BR_i(p_i)=B_i$ if and only if
$$
p_i\in\mathcal I_i(B_i),
$$
$$
\mathbb E_{\overline q_i(p_i)}
\left[
u_i(\cdot,\overline a_i,\cdot)
-
u_i(\cdot,a_i,\cdot)
\right]
>0
\quad\text{for every }a_i>\overline a_i,
$$
and
$$
\mathbb E_{\underline q_i(p_i)}
\left[
u_i(\cdot,\underline a_i,\cdot)
-
u_i(\cdot,a_i,\cdot)
\right]
>0
\quad\text{for every }a_i<\underline a_i.
$$

Suppose first that $\BR_i(p_i)=B_i$. By the extremal-distribution property, $\overline a_i$ is optimal under $\overline q_i(p_i)$. Every $a_i>\overline a_i$ must then yield a strictly lower payoff, since otherwise it would also belong to $\BR_i(p_i)$. The argument for actions below $\underline a_i$ is analogous.

Conversely, suppose that the three displayed conditions hold. Since $\BR_i(p_i)$ is an interval containing $B_i$, it can differ from $B_i$ only by containing an action above $\overline a_i$ or below $\underline a_i$. In the first case, $\max\BR_i(p_i)$ is optimal under $\overline q_i(p_i)$, contradicting the strict inequalities above $\overline a_i$. The second case is ruled out analogously by $\underline q_i(p_i)$. Hence $\BR_i(p_i)=B_i$.

By Lemma~\ref{Lem_BRpolytope}, the best-response inclusion region $\mathcal I_i(B_i)$ is a polyhedron. Each remaining condition gives a strict linear inequality in $p_i$. Thus $\BR_i^{-1}(B_i)$ is a mixed polyhedron. If $B_i$ is not an interval, then $\BR_i^{-1}(B_i)=\varnothing$. Hence the payoff structure is
polyhedral.
\end{proof}

We next give primitive conditions guaranteeing interval best-response sets.

Write
$$
A_i=\{a_i^1<\cdots<a_i^{R_i}\}.
$$

\begin{definition}[Own-action concavity]
The payoff function $u_i$ is \emph{concave in own action} if, for every
$(k,a_{-i})$ and every $r=1,\ldots,R_i-2$,
$$
u_i(k,a_i^{r+1},a_{-i})
-
u_i(k,a_i^r,a_{-i})
\geq
u_i(k,a_i^{r+2},a_{-i})
-
u_i(k,a_i^{r+1},a_{-i}).
$$
\end{definition}

\begin{definition}[Increasing differences]
The payoff function $u_i$ has \emph{increasing differences} if, for
every $a_i'\geq a_i$ and $a_{-i}'\geq a_{-i}$,
$$
u_i(k,a_i',a_{-i}')
-
u_i(k,a_i,a_{-i}')
\geq
u_i(k,a_i',a_{-i})
-
u_i(k,a_i,a_{-i})
$$
for every $k\in K$.
\end{definition}

\begin{proposition}[Concavity and increasing differences]
\label{prop:concavity-increasing-differences}
If every payoff function $u_i$ is concave in own action and has increasing differences, then the payoff structure has interval best-response sets.
\end{proposition}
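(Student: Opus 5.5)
The plan is to parametrize compatible distributions along a segment between the two extremal distributions $\underline q_i(p_i)$ and $\overline q_i(p_i)$, and to combine discrete concavity with monotone comparative statics. Fix $p_i$. Since $\mathcal A_{-i}=\prod_{j\neq i}\mathcal A_j$, each $B_{-i}$ is a product set. Hence its coordinatewise maximum and minimum lie in $B_{-i}$, so $\overline q_i(p_i),\underline q_i(p_i)\in\mathcal Q_i(p_i)$. Moreover $\mathcal Q_i(p_i)$ is convex.

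\emph{Step 1: each argmax is an interval.} For any $q\in\mathcal Q_i(p_i)$, the map $a_i\mapsto\mathbb E_q[u_i(k,a_i,a_{-i})]$ is a nonnegative combination of functions that are concave in own action. It therefore has nonincreasing successive differences along $a_i^1<\cdots<a_i^{R_i}$, so its argmax set is an interval of $A_i$.

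\emph{Step 2: extremal property.} Take any $q\in\mathcal Q_i(p_i)$. Conditionally on each $(k,B_{-i})$, the opponent action under $\overline q_i(p_i)$ is coordinatewise at least every $a_{-i}\in B_{-i}$. By increasing differences, for $a_i'\geq a_i$ the expected difference $u_i(a_i')-u_i(a_i)$ is then weakly larger under $\overline q_i(p_i)$ than under $q$. Topkis' theorem gives that $\arg\max_{\overline q_i(p_i)}$ dominates $\arg\max_q$ in the strong set order. In particular $\max\arg\max_q\leq\max\arg\max_{\overline q_i(p_i)}$. Since $\overline q_i(p_i)$ is itself compatible, this yields $\max\BR_i(p_i)=\max\arg\max_{\overline q_i(p_i)}$, which lies in that argmax. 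The minimal case is symmetric with $\underline q_i(p_i)$.

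\emph{Step 3: $\BR_i(p_i)$ is an interval.} Set $q_t=(1-t)\underline q_i(p_i)+t\,\overline q_i(p_i)$ for $t\in[0,1]$; each $q_t$ is compatible. For $a_i'\geq a_i$, the expected difference $u_i(a_i')-u_i(a_i)$ under $q_t$ is affine in $t$ with nonnegative slope, again by Step 2's comparison. So $G(t)=\arg\max_{q_t}$ is interval-valued (Step 1), increasing in $t$ in the strong set order, and has closed graph by continuity of the payoffs in $t$. Its endpoints are $\min\BR_i(p_i)$ at $t=0$ and $\max\BR_i(p_i)$ at $t=1$, by Step 2.

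Suppose some action $a$ strictly between these endpoints lies in no $G(t)$. Let $t^*=\sup\{t:G(t)\subseteq\{a_i<a\}\}$. Closed graph and monotonicity force $G(t^*)$ to contain both an action below $a$ and an action above $a$. Since $G(t^*)$ is an interval, it then contains $a$, a contradiction. Hence $\bigcup_t G(t)\supseteq[\min\BR_i(p_i),\max\BR_i(p_i)]$, and $\BR_i(p_i)$ is exactly this interval.

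I expect Step 3 to be the main obstacle. The union over all compatible $q$ could a priori have gaps, and the argument depends on the closed-graph-plus-monotonicity argument at $t^*$ being carried out carefully, including the boundary cases $t^*=0$ and $t^*=1$.
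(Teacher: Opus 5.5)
Your proof is correct. Steps 1 and 2 match the paper's argument: the paper verifies your Topkis comparison by hand, showing that if some $a_i>\overline a_i$ were optimal under $q_i$, increasing differences would make it optimal under $\overline q_i(p_i)$.

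The genuine difference is in how you show that $\BR_i(p_i)$ has no gaps. The paper takes two arbitrary compatible distributions $q^0,q^1$ whose maximizers $a_i^r,a_i^s$ straddle $a_i^t$. It notes $D^t_{q^0}\leq 0\leq D^{t-1}_{q^1}$ and applies the intermediate value theorem to the single affine increment $\lambda\mapsto D^{t-1}_{q^\lambda}$. Either $a_i^t$ is already optimal under $q^0$, or at the zero of that increment concavity gives $D^t\leq D^{t-1}=0$, so $a_i^t$ is optimal there. That argument uses only own-action concavity and convexity of $\mathcal Q_i(p_i)$. It therefore shows that interval-valuedness holds without increasing differences, which the paper needs only for the extremal-distribution part.

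Your route restricts attention to the segment from $\underline q_i(p_i)$ to $\overline q_i(p_i)$. It must therefore first establish Step 2, and it then invokes increasing differences, upper hemicontinuity and a supremum argument. In exchange you get a sharper conclusion: $\BR_i(p_i)=\bigcup_{t\in[0,1]}\arg\max_{q_t}$. The whole best-response set is swept out monotonically along one one-parameter family of compatible distributions.

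Two small remarks on your Step 3. First, the boundary cases you worry about are harmless. $G(0)$ lies entirely below $a$ and $G(1)$ entirely above it, so whichever side contains $t^*$, the other side supplies the approximating sequence. Second, monotonicity in $t$ is not actually needed there. By closed graph, the sets $\{t:G(t)\subseteq\{a_i<a\}\}$ and $\{t:G(t)\subseteq\{a_i>a\}\}$ are both closed, nonempty, and partition $[0,1]$, which contradicts connectedness.

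Finally, you are more careful than the paper in noting that $\overline q_i(p_i)$ and $\underline q_i(p_i)$ are compatible because each $B_{-i}$ is a product set. The paper uses this tacitly when it asserts $\overline a_i\in\BR_i(p_i)$.
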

Together with the preceding proposition, this result gives a primitive sufficient condition for polyhedrality and $\BR$-convexity. The proof is in Appendix~\ref{app:concavity-increasing-differences}.

These classes therefore admit direct representation by \ICRt\ hierarchies without augmentation. We now turn to arbitrary finite payoff structures.

\section{Direct Representation for General Payoff Structures}\label{sec:generalRP}

When best-response regions are nonconvex, coarsening types by their ordinary \ICRt\ hierarchies may average beliefs outside the announced regions. We restore convexity by refining each best-response region with finitely many cell tags. We first use the resulting finite \BR-extension to construct full augmented hierarchies. We then retain only the tags at transitions of the ordinary hierarchy to obtain a countable direct representation with finitely many obedience conditions per type.

\subsection{Full Augmented Hierarchies}
\label{sec:full-augmented}

\begin{definition}[Finite $\BR$-extension]
A \emph{finite $\BR$-extension} consists, for every player $i$, of a finite set $\mathcal C_i$ with a distinguished element $c_i^*$, a map $\alpha_i:\mathcal C_i\to\mathcal A_i$ satisfying $\alpha_i(c_i^*)=A_i$, and a measurable tagging map
$$
\tau_i:\Delta(K\times\mathcal A_{-i})\to\mathcal C_i
$$
such that $\alpha_i\circ\tau_i=\BR_i$. 
\end{definition}

The set $\mathcal C_i$ contains tags refining the action sets in the range of $\BR_i$, together with the distinguished tag $c_i^*$. The map $\tau_i$ assigns a tag to each belief, while $\alpha_i$ returns the associated best-response set. The condition $\alpha_i\circ\tau_i=\BR_i$ implies that $\BR_i$ is constant on every inverse image $\tau_i^{-1}(c_i)$, so these inverse images refine the best-response regions. The tag $c_i^*$ represents the level-zero action set $A_i$ and need not belong to the range of $\tau_i$.

\begin{definition}[$\BR$-convex and polyhedral extensions]
A finite $\BR$-extension is \emph{$\BR$-convex} if $\tau_i^{-1}(c_i)$ is convex for every player $i$ and every $c_i\in\mathcal C_i$. It is \emph{polyhedral} if $\tau_i^{-1}(c_i)$ is a mixed polyhedron for every player $i$ and every $c_i\in\mathcal C_i$.
\end{definition}

Every polyhedral $\BR$-extension is $\BR$-convex. Convex extensions make augmented hierarchies direct through the same averaging argument as under $\BR$-convex payoff structures, while polyhedral extensions additionally yield finite systems of weak and strict linear inequalities characterizing implementable distributions.

\begin{lemma}\label{lem:exist-extension}
Every finite payoff structure  admits a  polyhedral $\BR$-extension.
\end{lemma}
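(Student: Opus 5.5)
The plan is to build the extension directly from Lemma~\ref{Lem_BRpolytope}(iii). For every player $i$ and every $B_i\in\mathcal A_i$, that result gives a decomposition
$$
\BR_i^{-1}(B_i)=\bigcup_{j=1}^{n(B_i)} M_i^{B_i,j},
$$
where each $M_i^{B_i,j}$ is a mixed polyhedron in the simplex $\Delta(K\times\mathcal A_{-i})$. These pieces need not be disjoint, so they cannot serve as tag cells directly. I will make them disjoint by taking the usual disjoint refinement
$$
D_i^{B_i,j}\coloneqq M_i^{B_i,j}\setminus\bigcup_{j'<j}M_i^{B_i,j'}.
$$
This refinement is not yet a mixed polyhedron, but it is again a finite union of mixed polyhedra. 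To see this, note that the complement of a mixed polyhedron $\{x: f_\ell(x)\ \square_\ell\ 0,\ \ell=1,\dots,L\}$, with each $\square_\ell\in\{\geq,>\}$, is the finite union over $\ell$ of the half-spaces where the $\ell$-th inequality fails. Each such half-space is given by a single reversed strict or weak inequality, so it is itself a mixed polyhedron. Intersections of mixed polyhedra are mixed polyhedra. Distributing intersections over unions therefore writes $D_i^{B_i,j}$ as a finite union of mixed polyhedra.

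The main step is to make these pieces pairwise disjoint while keeping each one a mixed polyhedron. The cleanest route is to work with the full arrangement. Collect all affine functionals $f_1,\dots,f_L$ appearing in every inequality used in the descriptions of all $M_i^{B_i,j}$, across all $B_i$ and $j$, for the fixed player $i$. Each belief $p$ then has a \emph{sign vector} $\sigma(p)\in\{-,0,+\}^L$, with $\sigma_\ell(p)=\operatorname{sign} f_\ell(p)$. Every set of the form $\{p:\sigma(p)=\sigma\}$ is a mixed polyhedron, since it is defined by equalities (a pair of weak inequalities) and strict inequalities, and it is convex. Each $M_i^{B_i,j}$ is a union of such sign cells, because membership in $M_i^{B_i,j}$ depends only on the signs of its defining functionals. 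Hence each best-response region is a union of sign cells. Since the regions $\BR_i^{-1}(B_i)$ partition the simplex, every nonempty sign cell lies in exactly one region.

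I then define the tags and maps as follows:
\begin{itemize}[label={}]
\item \emph{Tags.} Set $\mathcal C_i\coloneqq\{\sigma: \text{the sign cell of }\sigma\text{ is nonempty}\}\cup\{c_i^*\}$, a finite set with at most $3^L+1$ elements.
\item \emph{Tagging map.} Set $\tau_i(p)\coloneqq\sigma(p)$.
\item \emph{Action-set map.} Set $\alpha_i(\sigma)\coloneqq$ the unique $B_i$ whose region contains the sign cell of $\sigma$, and $\alpha_i(c_i^*)\coloneqq A_i$.
\end{itemize}
By construction $\alpha_i\circ\tau_i=\BR_i$. The map $\tau_i$ is measurable because each $f_\ell$ is continuous, so every level set of $\tau_i$ is Borel. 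Each inverse image $\tau_i^{-1}(\sigma)$ is a mixed polyhedron, and $\tau_i^{-1}(c_i^*)=\varnothing$, which counts as a mixed polyhedron by the paper's convention on inconsistent systems. The extension is therefore polyhedral.

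The only point I expect to need care is the claim that each $M_i^{B_i,j}$ is a union of sign cells. This holds because its defining system uses only functionals among the $f_\ell$, and each inequality $f_\ell\geq0$ or $f_\ell>0$ is determined by $\sigma_\ell$. The simplex constraints can be included among the $f_\ell$, or else every cell is intersected with the simplex, which is itself a polyhedron. Everything else is bookkeeping.
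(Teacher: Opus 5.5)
Your proof is correct, but it takes a different route from the paper.

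\textbf{The paper's route.} The paper works region by region, starting from the identity $\BR_i^{-1}(B_i)=\mathcal I_i(B_i)\setminus\bigcup_{a_i\in A_i\setminus B_i}\mathcal I_i(B_i\cup\{a_i\})$. It splits the complement of each subtracted polyhedron according to the \emph{first} defining inequality that fails. This gives a disjoint partition into mixed polyhedra. It then intersects these finite partitions, together with $\mathcal I_i(B_i)$.

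\textbf{Your route.} You take Lemma~\ref{Lem_BRpolytope}(iii) as a black box. You then pass to the common refinement given by the full sign vector of all functionals involved, which is the face decomposition of a hyperplane arrangement restricted to the simplex. Disjointness across different $B_i$ then comes for free, because the best-response regions partition the simplex. Well-definedness of $\alpha_i$ follows from your observation that each region is a union of sign cells.

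\textbf{What each approach buys.} Your argument is generic and canonical. It needs no ordering of the inequalities, it applies verbatim to any finite partition of the simplex into finite unions of mixed polyhedra, and its cells are relatively open convex faces. The paper's lexicographic ``first violated inequality'' trick is more economical: it produces far fewer tags than your bound of $3^L+1$, and it exploits the specific difference structure of best-response regions instead of an arbitrary description.

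\textbf{Two small presentational points.} First, your opening paragraph on the sets $D_i^{B_i,j}$ is a dead end that the final construction never uses, so it can be dropped. Second, ``nonempty sign cell'' should consistently mean a sign cell with nonempty intersection with $\Delta(K\times\mathcal A_{-i})$, as you note at the end. Then $\tau_i^{-1}(\sigma)$ is exactly that intersection, and hence a mixed polyhedron.
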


\begin{proof}
By the identity defining exact best-response regions and Lemma~\ref{Lem_BRpolytope}, each region $\BR_i^{-1}(B_i)$ is the difference between the polyhedron $\mathcal I_i(B_i)$ and a finite union of polyhedra $\mathcal I_i(B_i\cup\{a_i\})$. Such a difference admits a finite disjoint decomposition into mixed polyhedra: partition the complement of each subtracted polyhedron according to the first defining inequality it violates, and intersect the resulting finite partitions. After discarding empty sets, choose mixed polyhedra $(D_i^r(B_i))_{r=1}^{R_i(B_i)}$ such that
$$
\BR_i^{-1}(B_i)=\bigsqcup_{r=1}^{R_i(B_i)}D_i^r(B_i).
$$
For every such set $D_i^r(B_i)$, introduce a tag $c_i^r(B_i)$, and adjoin a distinguished tag $c_i^*$. Let
$$
\mathcal C_i
=
\{c_i^*\}
\cup
\{c_i^r(B_i):B_i\in\mathcal A_i,\ 1\leq r\leq R_i(B_i)\}.
$$
Define $\alpha_i(c_i^*)=A_i$ and $\alpha_i(c_i^r(B_i))=B_i$. Since the best-response regions partition $\Delta(K\times\mathcal A_{-i})$, the sets $D_i^r(B_i)$ form a finite partition of this simplex. Hence the map $\tau_i:\Delta(K\times\mathcal A_{-i})\to\mathcal C_i$ defined by
$$
\tau_i(p)=c_i^r(B_i)
\quad\Longleftrightarrow\quad
p\in D_i^r(B_i)
$$
is well defined. Moreover, if $p\in D_i^r(B_i)$, then $\BR_i(p)=B_i=\alpha_i(\tau_i(p))$. Thus $\alpha_i\circ\tau_i=\BR_i$, so $(\mathcal C_i,\alpha_i,\tau_i)_i$ is a finite $\BR$-extension.

Finally, $\tau_i^{-1}(c_i^r(B_i))=D_i^r(B_i)$ is a mixed polyhedron, while $\tau_i^{-1}(c_i^*)=\varnothing$, which is also a mixed polyhedron. Hence the extension is polyhedral.
\end{proof}

Figure~\ref{fig:tagged-decomposition} applies this construction to the nonconvex best-response region in Figure~\ref{fig:nonconvex-br}. Its two components receive distinct tags with the same best-response-set label.

\begin{figure}[htbp]
\centering
\begin{tikzpicture}[x=.92cm,y=.82cm]
  \coordinate (P) at (0,0);
  \coordinate (Pp) at (7,0);
  \coordinate (R) at (3.5,4.2);

  \fill[blue!14] (0,0)--(2.33,0)--(1.17,1.4)--cycle;
  \fill[teal!16] (7,0)--(5.83,1.4)--(4.67,0)--cycle;
  \fill[red!16]
    (3.5,4.2)--(5.83,1.4)--(4.67,0)--(2.33,0)--(1.17,1.4)--cycle;

  \draw[brown!75!black,very thick] (P)--(Pp)--(R)--cycle;
  \draw[blue!70!black,thick] (0,0)--(2.33,0)--(1.17,1.4)--cycle;
  \draw[teal!70!black,thick] (7,0)--(5.83,1.4)--(4.67,0)--cycle;
  \draw[red!65!black,very thick]
    (3.5,4.2)--(5.83,1.4)--(4.67,0)--(2.33,0)--(1.17,1.4)--cycle;

  \fill[blue!70!black] (P) circle (2.7pt)
    node[below left,font=\scriptsize] {$p$};
  \fill[teal!70!black] (Pp) circle (2.7pt)
    node[below right,font=\scriptsize] {$p'$};
  \fill[blue!70!black] (R) circle (2.7pt)
    node[above,font=\scriptsize] {$r$};
\end{tikzpicture}
\caption{A polyhedral $\BR$-extension of Example 2. The triangle in Figure~\ref{fig:nonconvex-br} is partitioned into three convex cells. The blue and teal zones receive distinct tags, both of which project to the best-response set $\{a,b\}$. The red zone receives a tag that projects to $\{a,b,c\}$. Thus the tags distinguish the two components of the nonconvex best-response region.}
\label{fig:tagged-decomposition}
\end{figure}

\paragraph{Augmented hierarchies.}
A finite $\BR$-extension $(\mathcal C_i,c_i^*,\alpha_i,\tau_i)_i$  induces an operator $\beta_i:\Delta(K\times\mathcal C_{-i})\to\mathcal C_i$ on tagged beliefs, defined by
$$
\beta_i(q)
\coloneqq
\tau_i\bigl((\id\times\alpha_{-i})_\#q\bigr).
$$
Since $\alpha_i\circ\tau_i=\BR_i$, this operator satisfies
$$
\alpha_i\circ\beta_i
=
\BR_i\circ(\id\times\alpha_{-i})_\#.
$$
For every $c_i\in\mathcal C_i$,
$$
\beta_i^{-1}(c_i)
=
\left((\id\times\alpha_{-i})_\#\right)^{-1}
\bigl(\tau_i^{-1}(c_i)\bigr).
$$
Hence $\beta_i^{-1}(c_i)$ is convex when the extension is $\BR$-convex and is a mixed polyhedron when the extension is polyhedral.

Consider a common prior $P\in\Delta(K\times T)$, with conditional probabilities $P_i:T_i\to\Delta(K\times T_{-i})$. The \emph{augmented hierarchy} $(\AICR_i^m(t_i))_{m\geq0}$ of a type $t_i\in T_i$ is defined recursively by
\begin{enumerate}
\item[(i)] $\AICR_i^0(t_i)\coloneqq c_i^*$;
\item[(ii)] for every $m\in\naturals$,
\begin{equation}\label{ICR2}
\AICR_i^{m+1}(t_i)
\coloneqq
\beta_i\left(
(\id\times\AICR_{-i}^m)_\#P_i(t_i)
\right).
\end{equation}
\end{enumerate}

The map $\beta_i$ is measurable because $\tau_i$ is measurable and the pushforward map induced by $\id\times\alpha_{-i}$ is continuous between finite-dimensional probability simplices. It follows by induction from \eqref{ICR2} that every map $\AICR_i^m\colon T_i\to\mathcal C_i$ is measurable. Hence the full hierarchy map $\AICR_i\colon T_i\to\mathcal C_i^{\naturals}$ is measurable with respect to the product Borel structure.

Every common prior $P$ therefore induces a well-defined distribution
$$
P_{\AICR}
\coloneqq
(\id\times\AICR)_\#P
$$
on $K\times\mathcal C^{\naturals}$. Let $\mathcal P'$ denote the set of distributions obtained in this way as the common prior and the type spaces vary.

Unlike ordinary \ICRt\,hierarchies, augmented hierarchies need not form a countable space. Distinct tags may project onto the same action set, and the tag may change infinitely often even though the projected hierarchy has only finitely many strict decreases.

The map $\alpha$ recovers the ordinary \ICRt\,hierarchy from the augmented hierarchy.

\begin{proposition}\label{prop:AICR-projects}
For every common prior $P$, the augmented hierarchy projects level by level to the ordinary hierarchy: for every player $i$, every type $t_i$, and every $m\in\naturals$,
$$
\alpha_i\bigl(\AICR_i^m(t_i)\bigr)
=
\ICR_i^m(t_i).
$$
\end{proposition}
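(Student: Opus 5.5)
We argue by induction on $m$, simultaneously for all players and all types. At level zero, $\AICR_i^0(t_i)=c_i^*$ and $\alpha_i(c_i^*)=A_i=\ICR_i^0(t_i)$ by the definition of a finite $\BR$-extension, so the claim holds for every $i$ and every $t_i$.

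For the inductive step, suppose $\alpha_j\circ\AICR_j^m=\ICR_j^m$ holds pointwise on $T_j$ for every player $j$. Fix $i$ and $t_i$. The key identity is the one recorded after the definition of $\beta_i$, namely $\alpha_i\circ\beta_i=\BR_i\circ(\id\times\alpha_{-i})_\#$. Applying it to the tagged belief $(\id\times\AICR_{-i}^m)_\#P_i(t_i)$ gives
\[
\alpha_i\bigl(\AICR_i^{m+1}(t_i)\bigr)
=
\BR_i\Bigl((\id\times\alpha_{-i})_\#(\id\times\AICR_{-i}^m)_\#P_i(t_i)\Bigr).
\]
By functoriality of pushforwards, $(\id\times\alpha_{-i})_\#\circ(\id\times\AICR_{-i}^m)_\#=(\id\times(\alpha_{-i}\circ\AICR_{-i}^m))_\#$. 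Here $\alpha_{-i}\circ\AICR_{-i}^m$ is the product map $(\alpha_j\circ\AICR_j^m)_{j\neq i}$, which equals $\ICR_{-i}^m$ pointwise by the induction hypothesis. The argument of $\BR_i$ is therefore exactly $(\id\times\ICR_{-i}^m)_\#P_i(t_i)$, and \eqref{ICR} identifies the right-hand side with $\ICR_i^{m+1}(t_i)$.

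There is no substantial obstacle; the only point requiring care is measurability. We need $\AICR_{-i}^m$ and $\ICR_{-i}^m$ to be measurable so that both pushforwards are defined. This is supplied by the paper: measurability of $\AICR_{-i}^m$ was established just before the statement, and measurability of $\ICR_{-i}^m$ is Lemma~\ref{lem:ICR-properties}. Because the induction hypothesis is a pointwise identity of maps on $T_{-i}$, not merely an almost-sure one, the two pushforwards of $P_i(t_i)$ coincide for every $t_i$. The conclusion therefore holds for every type rather than almost every type, consistent with the paper's convention that the kernels are defined on all of $T_i$.
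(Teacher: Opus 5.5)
Your proof is correct and is essentially the paper's argument: induction on $m$ starting from $\alpha_i(c_i^*)=A_i$, with the inductive step combining $\alpha_i\circ\beta_i=\BR_i\circ(\id\times\alpha_{-i})_\#$, composition of pushforwards, and the pointwise hypothesis $\alpha_{-i}\circ\AICR_{-i}^m=\ICR_{-i}^m$. Your remarks on measurability, and on the identity holding for every type rather than almost every type, are accurate and consistent with the paper's setup.
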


\begin{proof}
The equality holds at level zero because $\alpha_i(c_i^*)=A_i=\ICR_i^0(t_i)$. Suppose it holds at level $m$ for every player. By \eqref{ICR2} and the definition of $\beta_i$,
\begin{align*}
\alpha_i\bigl(\AICR_i^{m+1}(t_i)\bigr)
&=
\BR_i\left(
\bigl(\id\times(\alpha_{-i}\circ\AICR_{-i}^m)\bigr)_\#P_i(t_i)
\right)\\
&=
\BR_i\left(
(\id\times\ICR_{-i}^m)_\#P_i(t_i)
\right)
=
\ICR_i^{m+1}(t_i).
\end{align*}
The result follows by induction.
\end{proof}

The induced distribution $P_{\AICR}$ may itself be viewed as a common prior whose type space for player $i$ is contained in $\mathcal C_i^{\naturals}$. Directness requires the generated augmented hierarchy to equal the announced one. We now characterize the distributions satisfying this identity condition.

\begin{theorem}[Direct Representation by Augmented Hierarchies]\label{RPGeneral}
Fix a finite $\BR$-convex extension. For every $P\in\Delta(K\times\mathcal C^{\naturals})$, the following conditions are equivalent:
\begin{enumerate}
\item $P\in\mathcal P'$;
\item When viewed as a common prior on augmented hierarchy types, $P$ satisfies $\AICR(c)=c$ almost surely;
\item $P(c_i^0=c_i^*)=1$ and, for every player $i$ and every $m\geq1$,
\begin{equation}\label{OC2}
c_i^m
=
\beta_i\left(
\marg_{k,c_{-i}^{m-1}}P(\cdot,\cdot\mid c_i)
\right)
\quad
\text{for almost every $c_i$.}
\end{equation}
\end{enumerate}
\end{theorem}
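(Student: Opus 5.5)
The plan mirrors the proof of Theorem~\ref{RP1}, with the operator $\beta_i$ in place of $\BR_i$ and the convex preimages $\beta_i^{-1}(c_i)$ in place of the best-response regions. The implications run in a cycle: $1\Rightarrow3\Rightarrow2\Rightarrow1$.

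\textbf{Step $1\Rightarrow3$.} Let $Q\in\Delta(K\times T)$ be a common prior with $P=(\id\times\AICR)_\#Q$. The level-zero condition $P(c^0_i=c_i^*)=1$ is immediate, since $\AICR_i^0\equiv c_i^*$. Now fix a player $i$, a round $m\geq1$, and set
$$
q_i^m(t_i)\coloneqq(\id\times\AICR_{-i}^{m-1})_\#Q_i(t_i)\in\Delta(K\times\mathcal C_{-i}).
$$
By \eqref{ICR2}, $\beta_i(q_i^m(t_i))=\AICR_i^m(t_i)$, so on the event $\{\AICR_i=c_i\}$ each $q_i^m(t_i)$ lies in $\beta_i^{-1}(c_i^m)$. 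This set is convex because the extension is $\BR$-convex, as noted after the definition of $\beta_i$.

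Disintegrating $Q$ along the measurable map $\AICR_i$ gives
$$
\marg_{k,c_{-i}^{m-1}}P(\cdot\mid c_i)=\int q_i^m(t_i)\,Q(dt_i\mid\AICR_i=c_i)
$$
for $P$-almost every $c_i$. The right-hand side is a barycenter of points of a convex subset of a finite-dimensional simplex, so it lies in that subset, and \eqref{OC2} follows. Since each $\AICR_i^m$ is measurable, the countably many constraints can be combined into a single null set.

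\textbf{Main obstacle.} The delicate point is that $\mathcal C_i^{\naturals}$ is uncountable, so conditioning on $c_i$ cannot be done atom by atom as in the countable case. Instead I will work with regular conditional probabilities on the standard Borel space $\mathcal C_i^{\naturals}$. The iterated-expectation identity is then verified by integrating both sides against test sets $\{c_i\in E\}$ and using the tower property for $Q$, since $\sigma(\AICR_i)\subseteq\sigma(t_i)$.

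One more fact is needed: a convex set contains the barycenter of any probability measure concentrated on it. In finite dimensions this holds for arbitrary convex sets, not only closed ones. The proof is by induction on dimension: if the barycenter lay on the relative boundary, the measure would concentrate on a supporting face, and one repeats the argument there.

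\textbf{Step $3\Rightarrow2$.} View $P$ as a common prior on the type spaces $\mathcal C_i^{\naturals}$, with the kernel $P(\cdot\mid c_i)$. Argue by induction on $m$: assume $\AICR_j^{m-1}(c_j)=c_j^{m-1}$ for almost every $c_j$, for all players $j$. Then $(\id\times\AICR^{m-1}_{-i})_\#P(\cdot\mid c_i)=\marg_{k,c_{-i}^{m-1}}P(\cdot\mid c_i)$ for almost every $c_i$. Here one uses that a $P$-null set of $c_{-i}$ is null under $P(\cdot\mid c_i)$ for almost every $c_i$. Applying \eqref{OC2} then gives $\AICR_i^m(c_i)=c_i^m$. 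The base case is the level-zero condition.

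\textbf{Step $2\Rightarrow1$.} If $\AICR(c)=c$ almost surely, then $(\id\times\AICR)_\#P=P$, so $P\in\mathcal P'$ with type spaces $\mathcal C_i^{\naturals}$. These are standard Borel, being countable products of finite sets.
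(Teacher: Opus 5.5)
Your proposal is correct and follows the paper's proof essentially step for step: the cycle $1\Rightarrow3\Rightarrow2\Rightarrow1$, with convexity of the cells $\beta_i^{-1}(c_i^m)$ preserving the tag under conditional averaging for $1\Rightarrow3$, induction on rounds for $3\Rightarrow2$, and $P$ itself as the common prior for $2\Rightarrow1$. The extra details you supply are correct and make explicit what the paper leaves implicit: regular conditional probabilities on $\mathcal C_i^{\naturals}$, the fact that the barycenter of a measure concentrated on a not-necessarily-closed convex set in finite dimensions lies in that set, and the transfer of $P$-null sets to conditional measures.
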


\begin{proof}
The proof follows that of Theorem~\ref{RP1}, with $\beta_i$ and augmented hierarchies replacing $\BR_i$ and ordinary hierarchies. For $(1)\Rightarrow(3)$, conditional on $c_i$, every original level-$m$ belief belongs to the convex cell $\beta_i^{-1}(c_i^m)$. Their conditional average, which is the direct belief in \eqref{OC2}, remains in this cell. The level-zero condition holds by construction.

For $(3)\Rightarrow(2)$, regard $P$ as a common prior on the announced augmented hierarchies. The level-zero condition and \eqref{OC2} imply by induction that the generated hierarchy equals the announced one. For $(2)\Rightarrow(1)$, use $P$ itself as the common prior. Its induced distribution over states and augmented hierarchies is $P$.
\end{proof}

The theorem makes augmented hierarchies direct types for every finite payoff structure, while condition \emph{3} gives the obedience characterization.

For a polyhedral extension, each cell $\beta_i^{-1}(c_i)$ is defined by one finite system of weak and strict linear inequalities. Hence each augmented obedience constraint is one such system, rather than a disjunction over cells. These inequalities apply to regular conditional beliefs and remain meaningful when the marginal distribution over augmented hierarchies is nonatomic. Since every finite payoff structure admits a polyhedral $\BR$-extension, every finite payoff structure admits this tagged conditional-belief characterization.

\paragraph{The size of full augmentation.}
The full augmented hierarchy resolves the geometric obstruction to directness, but it records a cell tag at every round. Its type space may therefore be uncountable, and its obedience characterization may require countably many conditions per type.

As observed in Section~\ref{sec:brconvex}, the set $S_i$ of ordinary \ICRt\,hierarchies is countable. Full augmented hierarchies need not form a countable set. Even after the ordinary action set becomes constant, its tag may move between distinct cells carrying the same action-set label. Arbitrary infinite tag sequences may then be realized without changing the ordinary hierarchy.

The following continuation of Example~2 illustrates both the cell decomposition used by full augmentation and the source of its uncountable type space.

\paragraph{Example 2 continued.}
\label{sec:uncountable-augmented}
Consider the payoff structure of Example 2. Let $p^0$ assign probability one to $\{x\}$, let $p^y$ assign probability one to $\{x,y\}$, and let $p^z$ assign probability one to $\{x,z\}$. Consider beliefs
$$
q=\alpha p^0+\beta p^y+\gamma p^z,
\qquad
\alpha+\beta+\gamma=1,
$$
with $\alpha,\beta,\gamma\geq0$. The compatible distribution that always assigns action $x$ makes both $a$ and $b$ optimal. Hence $a,b\in\BR_1(q)$ throughout this triangle.

Suppose that a compatible joint distribution induces probabilities $(X,Y,Z)$ over $(x,y,z)$. Since
$$
U_1(a)=1-Z,
\qquad
U_1(b)=1-Y,
\qquad
U_1(c)=\frac23,
$$
action $c$ is optimal if and only if $Y,Z\geq1/3$. Under $q$, the maximal probabilities that can be assigned to $y$ and $z$ are $\beta$ and $\gamma$, respectively, and these bounds can be attained simultaneously. Therefore
$$
c\in\BR_1(q)
\quad\Longleftrightarrow\quad
\beta\geq\frac13
\ \text{ and }\
\gamma\geq\frac13.
$$
Consequently,
$$
\BR_1(q)=\{a,b\}
\quad\Longleftrightarrow\quad
\beta<\frac13
\ \text{ or }\
\gamma<\frac13.
$$
This region is connected because its two parts are convex and intersect at $p^0$, but it is not convex because it contains $p^y$ and $p^z$ but not their midpoint.

One possible disjoint convex-cell decomposition of the triangle is
$$
D^L
=
\left\{q:\gamma<\frac13,\ \beta\geq\gamma\right\},
\qquad
D^R
=
\left\{q:\beta<\frac13,\ \gamma>\beta\right\},
$$
and
$$
D^c
=
\left\{q:\beta\geq\frac13,\ \gamma\geq\frac13\right\}.
$$
The cells $D^L$ and $D^R$ both have best-response-set label $\{a,b\}$, while $D^c$ has label $\{a,b,c\}$. Let
$$
v=\frac13p^0+\frac13p^y+\frac13p^z
$$
be the interior vertex of $D^c$. The segment $[p^0,v)$ is assigned to $D^L$; assigning it to $D^R$ instead would give another valid decomposition.

\begin{figure}[htbp]
\centering
\begin{minipage}[c]{0.48\textwidth}
\centering
\begin{tikzpicture}[x=.86cm,y=.77cm]
  \coordinate (Py) at (0,0);
  \coordinate (Pz) at (7,0);
  \coordinate (P0) at (3.5,4.2);
  \coordinate (L) at (2.33,0);
  \coordinate (R) at (4.67,0);
  \coordinate (V) at (3.5,1.4);

  \fill[brown!16] (Py)--(Pz)--(P0)--cycle;
  \draw[brown!75!black,very thick] (Py)--(Pz)--(P0)--cycle;

  \fill[red!16] (L)--(R)--(V)--cycle;
  \draw[red!65!black,very thick] (L)--(R)--(V)--cycle;

  \fill[blue!70!black] (Py) circle (2.7pt)
    node[below left,font=\scriptsize] {$p^y$};
  \fill[blue!70!black] (Pz) circle (2.7pt)
    node[below right,font=\scriptsize] {$p^z$};
  \fill[blue!70!black] (P0) circle (2.7pt)
    node[above,font=\scriptsize] {$p^0$};
\end{tikzpicture}

\smallskip
{\scriptsize Best-response regions}
\end{minipage}
\hfill
\begin{minipage}[c]{0.48\textwidth}
\centering
\begin{tikzpicture}[x=.86cm,y=.77cm]
  \coordinate (Py) at (0,0);
  \coordinate (Pz) at (7,0);
  \coordinate (P0) at (3.5,4.2);
  \coordinate (L) at (2.33,0);
  \coordinate (R) at (4.67,0);
  \coordinate (V) at (3.5,1.4);

  \fill[blue!14] (Py)--(L)--(V)--(P0)--cycle;
  \fill[green!16] (Pz)--(R)--(V)--(P0)--cycle;
  \fill[red!16] (L)--(R)--(V)--cycle;

  \draw[blue!70!black,thick] (Py)--(L)--(V)--(P0)--cycle;
  \draw[green!50!black,thick] (Pz)--(R)--(V)--(P0)--cycle;
  \draw[red!65!black,very thick] (L)--(R)--(V)--cycle;
  \draw[blue!70!black,very thick] (P0)--(V);

  \fill[blue!70!black] (Py) circle (2.7pt)
    node[below left,font=\scriptsize] {$p^y$};
  \fill[green!50!black] (Pz) circle (2.7pt)
    node[below right,font=\scriptsize] {$p^z$};
  \fill[blue!70!black] (P0) circle (2.7pt)
    node[above,font=\scriptsize] {$p^0$};
  \fill[red!70!black] (V) circle (2.7pt)
    node[right,font=\scriptsize] {$v$};
\end{tikzpicture}

\smallskip
{\scriptsize A disjoint convex-cell decomposition}
\end{minipage}
\caption{The triangle and a disjoint convex-cell decomposition. Left: the brown region has best-response set $\{a,b\}$ and the red region has best-response set $\{a,b,c\}$. Right: the blue cell $D^L$ and the green cell $D^R$ both project to $\{a,b\}$, while the red cell $D^c$ projects to $\{a,b,c\}$. The segment $[p^0,v)$ is assigned to $D^L$, while $v$ belongs to $D^c$.}
\label{fig:connected-nonconvex-br}
\end{figure}

One can construct a state space in which, while the ordinary best-response set remains $\{a,b\}$, every sequence of visits to the blue and green cells is realized by some type. The set of such sequences has the cardinality of the continuum. Thus full augmented hierarchies can genuinely be uncountable.

\subsection{Transition-Tagged Hierarchies}
\label{sec:transition-tagged}
\label{sec:exact-countability}

Full augmentation records every cell visited by the belief sequence, including cell changes that occur while the ordinary best-response set remains constant. The ordinary hierarchy already records the successive best-response sets and the rounds at which they change. Moreover, Lemma~\ref{lem:transition-obedience-ordinary} shows that obedience throughout a constant stretch follows from obedience at its entry and exit. We therefore retain only the cell tags at those rounds.

Fix a finite $\BR$-convex extension $(\mathcal C_i,c_i^*,\alpha_i,\tau_i)_i$. A \emph{transition-tagged hierarchy} of player $i$ consists of an ordinary hierarchy $s_i\in S_i$, an incoming tag
$$
c_i^{\mathrm{in}}(B_i)\in\mathcal C_i
\quad\text{for every }B_i\in\mathcal R_i(s_i),
$$
and an outgoing tag
$$
c_i^{\mathrm{out}}(B_i)\in\mathcal C_i
\quad\text{whenever }\overline m_i(B_i;s_i)<\infty,
$$
such that
$$
\alpha_i(c_i^{\mathrm{in}}(B_i))=B_i,
\qquad
\alpha_i(c_i^{\mathrm{out}}(B_i))=B_i
$$
whenever the corresponding tag is defined. Write
$$
z_i=(s_i,c_i^{\mathrm{in}},c_i^{\mathrm{out}})
$$
and denote the set of transition-tagged hierarchies by $Z_i$.

The space $Z_i$ is countable. The ordinary hierarchy space $S_i$ is countable, every ordinary hierarchy contains at most $|A_i|$ plateaus, and each plateau carries at most two tags drawn from the finite set $\mathcal C_i$.

Consider $P\in\Delta(K\times Z)$ as a common prior under which player $i$ observes $z_i$. Define
$$
p_i^m(z_i)
\coloneqq
\marg_{k,s_{-i}^{m-1}}
P(\cdot,\cdot\mid z_i),
\qquad
p_i^\infty(z_i)
\coloneqq
\marg_{k,s_{-i}^\infty}
P(\cdot,\cdot\mid z_i).
$$

Every common prior $Q\in\Delta(K\times T)$ induces a transition-tagged hierarchy. Write $s_i(t_i)=\ICR_i(t_i)$ and
$$
q_i^m(t_i)
=
(\id\times\ICR_{-i}^{m-1})_\#Q_i(t_i).
$$
Applying $\tau_i$ to the belief at the entry and every finite exit of each plateau defines a map $\zeta_i:T_i\to Z_i$. Let $\zeta=(\zeta_i)_i$ and denote the induced distribution by
$$
Q_Z
=
(\id\times\zeta)_\#Q.
$$

\begin{theorem}[Direct Representation by Transition-Tagged Hierarchies]
\label{thm:plateau-RP}
Fix a finite $\BR$-convex extension. For every $P\in\Delta(K\times Z)$, the following conditions are equivalent:
\begin{enumerate}
\item There exists a common prior $Q$ such that $P=Q_Z$.
\item When viewed as a common prior on $Z$, $P$ generates the announced ordinary hierarchies and reproduces the announced transition tags.
\item For every player $i$ and almost every $z_i=(s_i,c_i^{\mathrm{in}},c_i^{\mathrm{out}})$:
\begin{enumerate}
\item for every $B_i\in\mathcal R_i(s_i)$,
$$
\tau_i\left(
p_i^{\underline m_i(B_i;s_i)}(z_i)
\right)
=
c_i^{\mathrm{in}}(B_i);
$$
\item for every $B_i\in\mathcal R_i(s_i)$ such that $\overline m_i(B_i;s_i)<\infty$,
$$
\tau_i\left(
p_i^{\overline m_i(B_i;s_i)}(z_i)
\right)
=
c_i^{\mathrm{out}}(B_i);
$$
\item for the terminal plateau,
$$
s_i^\infty
\subseteq
\BR_i\bigl(p_i^\infty(z_i)\bigr).
$$
\end{enumerate}
\end{enumerate}
\end{theorem}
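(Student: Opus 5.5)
I will follow the cyclic structure of Theorem~\ref{RP1}, proving $(1)\Rightarrow(3)\Rightarrow(2)\Rightarrow(1)$. Condition 2 is read as follows. Regard $P$ as a common prior on $Z$, where player $i$ observes $z_i$. The ordinary \ICRt\ hierarchy generated by $z_i$ must equal $s_i$ almost surely. Moreover, applying $\tau_i$ to the generated beliefs at the entry and finite-exit rounds of the plateaus of $s_i$ must return $c_i^{\mathrm{in}}$ and $c_i^{\mathrm{out}}$.

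For $(1)\Rightarrow(3)$, I fix a common prior $Q$ with $P=Q_Z$. I then fix a player $i$, a tag profile $z_i$ of positive probability, and a plateau $B_i$ with entry round $\ell$. For almost every $t_i$ with $\zeta_i(t_i)=z_i$, the original belief $q_i^\ell(t_i)$ lies in the cell $\tau_i^{-1}(c_i^{\mathrm{in}}(B_i))$, by the definition of $\zeta_i$. Because $\zeta_{-i}$ determines $\ICR_{-i}$, the law of iterated expectations gives
$$
p_i^\ell(z_i)=\int q_i^\ell(t_i)\,Q(dt_i\mid\zeta_i=z_i).
$$
Convexity of the cell then gives (a). The same argument at the finite exit round gives (b).

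For (c), the terminal condition, I would argue as in Corollary~\ref{cor:ordinary-transition-oc}. For each type $t_i$, every action in $s_i^\infty$ lies in $\BR_i(q_i^m(t_i))$ for all $m$. Since $\mathcal I_i(s_i^\infty)$ is a closed convex polyhedron (Lemma~\ref{Lem_BRpolytope}), the limit $q_i^m(t_i)\to q_i^\infty(t_i)$ in total variation stays in it, and so does the conditional average $p_i^\infty(z_i)$. This step uses convexity of $\mathcal I_i$, which holds because it is a polyhedron, not merely closedness. It also requires the interchange of the limit and the average, which I justify by dominated convergence on a finite simplex.

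For $(3)\Rightarrow(2)$, I use induction on rounds to show that the generated ordinary hierarchy equals $s_i$, simultaneously for all players. Suppose the generated hierarchies agree with the announced ones through round $m-1$ almost surely. Then the generated belief at round $m$ equals $p_i^m(z_i)$. It remains to show $\BR_i(p_i^m(z_i))=s_i^m$, and this is where Lemma~\ref{lem:transition-obedience-ordinary} comes in. Applying $\alpha_i$ to (a) and (b), and using $\alpha_i\circ\tau_i=\BR_i$, yields the entry and finite-exit conditions; (c) is the terminal condition.

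The obstacle is that Lemma~\ref{lem:transition-obedience-ordinary} is stated with the beliefs $p_i^m$ defined from the announced hierarchies at all rounds. The induction only knows agreement up to round $m-1$. This is harmless, because $p_i^m(z_i)$ depends only on the announced $s_{-i}^{m-1}$, which are fixed objects of $P$. So I apply the lemma once, directly to $P$, to get all level-by-level obedience constraints $\BR_i(p_i^m(z_i))=s_i^m$. The induction then goes through verbatim, as in Theorem~\ref{RP1}, with the monotonicity in \citet[Lemma~4.1]{govei2026strategic} applying because $s_{-i}^{m}$ is nested in $m$ under $P(\cdot\mid z_i)$. Once the hierarchies agree, the generated beliefs at entry and exit rounds are exactly $p_i^{\underline m}$ and $p_i^{\overline m}$, so (a) and (b) reproduce the tags.

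For $(2)\Rightarrow(1)$, take $Q=P$ on the type spaces $Z_i$, which are countable and hence standard Borel. By condition 2, $\zeta_i(z_i)=z_i$ almost surely, so $Q_Z=P$.
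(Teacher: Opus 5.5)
Your proposal is correct and follows the paper's proof essentially step for step. You use convexity of the tagged cells under conditional averaging for (a)--(b), closedness and convexity of $\mathcal I_i(s_i^\infty)$ for (c), projection through $\alpha_i$ together with the monotonicity argument of Lemma~\ref{lem:transition-obedience-ordinary} (applied conditionally on $z_i$) for $(3)\Rightarrow(2)$, and $P$ itself as the common prior for $(2)\Rightarrow(1)$; your remark that the transition lemma applies once to $P$ because $p_i^m(z_i)$ depends only on announced objects is a useful clarification the paper leaves implicit, and the limit--average interchange you invoke for (c) is harmless but unnecessary, since $p_i^\infty(z_i)$ is directly the conditional average of the terminal beliefs $q_i^\infty(t_i)$.
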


\begin{proof}
$(1)\Rightarrow(3)$. Conditional on $z_i$, all original beliefs at a recorded transition belong to the cell identified by the corresponding tag. The cell is convex, so their conditional average belongs to the same cell and reproduces the tag. For the terminal condition, the original beliefs $q_i^m(t_i)$ converge in total variation to the belief induced by the terminal opponent sets. Every action in $s_i^\infty$ is a best response at every finite round and remains so at the limit by Lemma~\ref{Lem_BRpolytope}(i). The best-response inclusion region $\mathcal I_i(s_i^\infty)$ is convex by Lemma~\ref{Lem_BRpolytope}(ii), so the terminal inclusion is preserved under conditional averaging.

$(3)\Rightarrow(2)$. Projecting the incoming and outgoing tag conditions through $\alpha_i$ gives the entry and finite-exit conditions of Lemma~\ref{lem:transition-obedience-ordinary}, conditional on $z_i$. The same monotonicity argument therefore gives ordinary obedience at every round, and the endpoint conditions reproduce the announced transition tags.

$(2)\Rightarrow(1)$. Use $P$ itself as the common prior on $Z$. Its induced distribution of transition-tagged hierarchies is $P$.
\end{proof}

The theorem shows that transition tagging preserves directness while discarding tag changes within plateaus.

As in the ordinary case, a transition-tagged hierarchy with $q_i$ plateaus requires $q_i$ incoming-tag conditions, $q_i-1$ outgoing-tag conditions, and one terminal condition. Thus the general direct representation also admits a description with at most $2|A_i|$ nontrivial obedience conditions per type.

The preceding theorem combines the three properties sought in this section: transition-tagged hierarchies are direct, their type spaces are countable, and their induced distributions are characterized by finitely many obedience conditions per type. Projecting away the tags gives the following exact representation of ordinary \ICRt\ hierarchies.

\begin{corollary}[Exact Countable Direct Representation]
\label{cor:countable-RP}
For every finite payoff structure and every common prior, there exists a common prior with countable type spaces that induces the same joint distribution over the state and the entire ordinary \ICRt\,hierarchy.
\end{corollary}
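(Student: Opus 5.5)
The plan is to use the induced transition-tagged distribution itself as the countable common prior. Given any payoff structure $u$ and any common prior $Q\in\Delta(K\times T)$, Lemma~\ref{lem:exist-extension} supplies a polyhedral, hence $\BR$-convex, finite extension $(\mathcal C_i,c_i^*,\alpha_i,\tau_i)_i$. With this extension fixed, we form the map $\zeta=(\zeta_i)_i$ constructed before Theorem~\ref{thm:plateau-RP}. Each type $t_i$ goes to its ordinary hierarchy $s_i(t_i)=\ICR_i(t_i)$, and each plateau of that hierarchy is tagged with $\tau_i$ of the level belief $q_i^m(t_i)$ at its entry round and, if finite, at its exit round. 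We then set $P\coloneqq Q_Z=(\id\times\zeta)_\#Q\in\Delta(K\times Z)$. The type spaces $Z_i$ are countable, as noted before the theorem, and with the discrete structure they are standard Borel.

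First we check that $\zeta_i$ is measurable. By Lemma~\ref{lem:ICR-properties}, each $\ICR_i^m$ is measurable, so the entry and exit rounds of every plateau are measurable functions of $t_i$. The maps $t_i\mapsto q_i^m(t_i)$ are measurable, being pushforwards of the kernel $Q_i$ under measurable maps. Since $\tau_i$ is measurable, each preimage $\zeta_i^{-1}(z_i)$ is then a countable Boolean combination of measurable sets. Hence $P$ is well defined.

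Next, condition 1 of Theorem~\ref{thm:plateau-RP} holds for $P$ by construction. The theorem then yields condition 2: when $P$ is viewed as a common prior on $Z$ in which player $i$ observes $z_i$, the generated ordinary hierarchy equals the announced $s_i$ for $P$-almost every $z$. Let $\pi$ denote the projection $(k,z)\mapsto(k,s)$. The distribution of the state together with the generated ordinary hierarchies under $P$ is therefore $\pi_\#P$. Because $\pi\circ(\id\times\zeta)=\id\times\ICR$, we get $\pi_\#P=(\id\times\ICR)_\#Q=Q_{\ICR}$, which is exactly the claimed equality of joint distributions.

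The only delicate point is the almost-sure qualification in condition 2. It must be applied under the marginal of $P$ on each $Z_i$, using the conditional kernels of $P$, which are unique on atoms because $Z_i$ is countable. With that done, the exceptional null sets do not affect the pushforward identity. The substantive work, namely preserving convex cells under conditional averaging and deducing obedience at intermediate rounds through Lemma~\ref{lem:transition-obedience-ordinary}, is already contained in Theorem~\ref{thm:plateau-RP}, so the corollary reduces to this bookkeeping.
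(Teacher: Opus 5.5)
Your proposal is correct and follows essentially the same route as the paper. You take a polyhedral extension from Lemma~\ref{lem:exist-extension}, use $Q_Z=(\id\times\zeta)_\#Q$ on the countable spaces $Z_i$ as the new common prior, and invoke Theorem~\ref{thm:plateau-RP} (condition~1 implies condition~2) so that the generated ordinary hierarchies equal the announced ones and projecting away the tags recovers $Q_{\ICR}$. Your measurability check for $\zeta_i$ and your explicit point that one needs the generated, not merely announced, hierarchies to match supply details the paper's short proof leaves implicit.
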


\begin{proof}
By Lemma~\ref{lem:exist-extension}, the payoff structure admits a finite polyhedral, hence $\BR$-convex, extension. Apply Theorem~\ref{thm:plateau-RP} to the induced distribution of transition-tagged hierarchies. This distribution is itself a common prior on the countable type spaces $Z_i$, and its marginal distribution over the state and ordinary hierarchies coincides with that induced by the original common prior.
\end{proof}

The exact direct representation records the full timing of the ordinary hierarchy but only finitely many tags for each type. It retains the unbounded transition dates required by electronic-mail examples while eliminating the infinite cell crossings that make the full augmented hierarchy space uncountable.

\section{Finite Approximation}
\label{sec:finite-approximation}

Countability is sharp for exact representation. In Example~1, the common prior assigns positive probability to infinitely many distinct \ICRt\ hierarchies with unbounded transition dates. No finite type space can reproduce that state--hierarchy distribution exactly. We therefore turn from exact representation of entire hierarchies to approximation of terminal \ICRt\ outcome distributions. These distributions belong to the finite-dimensional simplex $\Delta(K\times\mathcal A)$, on which total variation induces the usual Euclidean topology.

For a common prior $P$, let
$$
\nu_P
\coloneqq
(\id\times\ICR^\infty)_\#P
\in\Delta(K\times\mathcal A)
$$
denote its terminal \ICRt\,outcome distribution.

The construction is related to \citet{lipman2003finite}, who shows that, in finite partitions models, any fixed finite order of beliefs and knowledge consistent with common support can be matched by a finite common-prior model. Our truncation retains a payoff-dependent record instead, including the terminal \ICRt\ set, because agreement at finitely many belief orders does not control terminal \ICRt\ behavior.

For approximation, it is useful to return to full augmented hierarchies and retain only a finite prefix. Fix a common prior $P\in\Delta(K\times T)$ and a finite polyhedral $\BR$-extension. Its finite $M$-truncation, denoted $Q^M$, replaces each type $t_i$ with the record consisting of its terminal \ICRt\ set and its augmented tags through round $M$. The formal construction is given in Appendix~\ref{app:finite-approximation}.

\begin{theorem}[Finite-type approximation by truncations]
\label{thm:finite-type-approximation}
The terminal \ICRt\,outcome distributions of $P$ and its $M$-truncation satisfy
$$
\|\nu_{Q^M}-\nu_P\|_{\mathrm{TV}}
\longrightarrow0.
$$
For every $M$, the augmented-hierarchy distribution induced by $Q^M$ satisfies the identity condition~(2) of Theorem~\ref{RPGeneral}, has finite support, and is concentrated on eventually constant augmented hierarchies.
\end{theorem}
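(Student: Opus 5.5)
The plan is to take $Q^M$ to be the pushforward of $P$ under the record map
$$
\rho_i^M(t_i)\coloneqq\bigl(\ICR_i^\infty(t_i),\AICR_i^0(t_i),\ldots,\AICR_i^M(t_i)\bigr).
$$
The record space is finite because $\mathcal A_i$ and $\mathcal C_i$ are finite, so $Q^M$ has finite support automatically. I would then compute \ICRt\ and the augmented hierarchy of each record under $Q^M$, viewed as a common prior on records. The argument has three steps: show that $Q^M$ reproduces the recorded prefix, sandwich its terminal sets between the recorded terminal set and the recorded round-$M$ set, and bound the total-variation error by the probability that some hierarchy has not stabilized by round $M$.

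\emph{Step 1 (prefix reproduction).} I would show that for every record $r_i$ and every $m\le M$, the augmented tag generated under $Q^M$ equals the recorded tag $c_i^m$. This is the averaging argument of Theorem~\ref{RPGeneral}, $(1)\Rightarrow(3)$, run level by level. The round-$m$ belief depends only on the opponents' tags at round $m-1\le M-1$, and these are measurable with respect to the records. So the round-$m$ belief of record $r_i$ is the conditional average of the original beliefs $(\id\times\AICR_{-i}^{m-1})_\#P_i(t_i)$ over the types $t_i$ with $\rho_i^M(t_i)=r_i$. All of these beliefs lie in the convex cell $\beta_i^{-1}(c_i^m)$, so their average does too. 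By induction on $m$ and Proposition~\ref{prop:AICR-projects}, the ordinary \ICRt\ sets of $Q^M$ coincide with the recorded $s_i^m$ for all $m\le M$.

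\emph{Step 2 (terminal sandwich).} For the lower bound, every original type satisfies $\ICR_i^\infty(t_i)\subseteq\BR_i\bigl((\id\times\ICR^\infty_{-i})_\#P_i(t_i)\bigr)$, by the limiting argument in Corollary~\ref{cor:ordinary-transition-oc} using Lemma~\ref{Lem_BRpolytope}(i). The inclusion region $\mathcal I_i(s_i^\infty)$ is convex by Lemma~\ref{Lem_BRpolytope}(ii), so the recorded terminal sets form a best-reply-closed profile under $Q^M$: each $s_i^\infty$ is contained in $\BR_i$ of the record's belief about opponents' recorded terminal sets. Induction on $m$, together with the monotonicity of $\BR_i$ in opponents' sets \citep[Lemma~4.1]{govei2026strategic}, then gives $s_i^\infty\subseteq\ICR_i^m(r_i)$ for every $m$ under $Q^M$. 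For the upper bound, Step~1 and Lemma~\ref{lem:ICR-properties} give $\ICR_i^\infty(r_i)\subseteq s_i^M$. Hence the terminal set of $Q^M$ equals the true terminal set on the event $E_M=\{t:\ICR_i^M(t_i)=\ICR_i^\infty(t_i)\ \forall i\}$. Consequently $\|\nu_{Q^M}-\nu_P\|_{\mathrm{TV}}\le 2\,P(E_M^c)$. Since every hierarchy stabilizes at a finite round, $E_M\uparrow$ the whole space, and continuity of measure gives convergence to $0$.

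\emph{Step 3 (directness and eventual constancy).} The identity condition~(2) of Theorem~\ref{RPGeneral} follows from $(1)\Rightarrow(2)$ of that theorem, because the augmented-hierarchy distribution induced by $Q^M$ lies in $\mathcal P'$. I expect eventual constancy to be the main obstacle. A finite type space makes the joint tag profile evolve by a deterministic map on a finite set, which only yields eventual periodicity, and cell tags can oscillate while the ordinary sets stay fixed. My first attempt would be to build this into the formal construction. On each record, after round $M$ I would freeze the tag at the cell of a suitably chosen limiting belief. I would then verify via Step~2 that the belief stays in that cell once the ordinary sets of all records have stabilized, which happens after finitely many further rounds because the record space is finite. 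If that fails, I would coarsen the records further and reapply convexity of the cells.
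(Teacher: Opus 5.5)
Your Steps 1 and 2 are correct and follow the paper's proof essentially line by line: prefix reproduction by averaging within convex cells, pooled terminal obedience plus monotonicity for the lower bound, and the coupling bound on the event that all hierarchies have stabilized by round $M$. The factor $2$ in your total-variation bound is harmless. Finite support and condition~(2) also go through as you say.

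The gap is in Step~3. You do not prove that the augmented hierarchies generated by $Q^M$ are eventually constant, and the obstacle you describe does not exist. The missing observation is that $\beta_i(q)=\tau_i\bigl((\id\times\alpha_{-i})_\#q\bigr)$. By Proposition~\ref{prop:AICR-projects}, the tag generated at round $m+1$ for a record $x_i$ under $Q^M$ is therefore
$$
\tau_i\Bigl(\marg_{k,\widehat R_{-i}^{m}}Q^M(\cdot\mid x_i)\Bigr),
$$
where $\widehat R_{-i}^{m}$ denotes the opponents' ordinary round-$m$ sets generated by $Q^M$. So the tag depends on opponents' tags only through their ordinary sets. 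The truncation $Q^M$ has finitely many positive-probability records, and each generated ordinary hierarchy is decreasing in a finite set (Lemma~\ref{lem:ICR-properties}). Hence there is a common round $L_M$ after which the ordinary set of every record is constant. From then on, each record's belief over the state and the opponents' ordinary sets does not depend on $m$, so its tag is constant from round $L_M+1$ on. You actually state the needed ingredient (``the ordinary sets of all records have stabilized'' after finitely many rounds), but you do not combine it with this factorization.

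Your worry that tags can cycle while ``the ordinary sets stay fixed'' conflates a player's own ordinary set with everyone's. A tag can change while the player's own set is constant only when some opponent record's ordinary set changes, and in a finite model that happens finitely often. The continuum in Example~2 continued requires infinitely many types with unbounded transition dates.

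Consequently, the fix you sketch is unnecessary: freezing tags after round $M$ at the cell of a limiting belief, or coarsening the records. As stated, it also proves something about a different object. The theorem concerns the augmented hierarchy generated by the recursion~\eqref{ICR2} on the truncation $Q^M$ as defined (terminal set plus tags through round $M$), and that hierarchy cannot be prescribed by fiat.
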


The truncation records the terminal \ICRt\ set and the first $M$ augmented tags. Convexity of the tagged cells reproduces the recorded prefix, while monotonicity ensures that the declared terminal actions remain rationalizable. The approximation error is bounded by the probability that some original hierarchy has not reached its terminal \ICRt\ set by round $M$, which converges to zero. The proof is given in Appendix~\ref{app:finite-approximation}.

The approximating common priors can also be selected from a fixed countable collection.

\begin{theorem}[Countable outcome-dense family of types]
\label{thm:countable-dense-types}
For every finite payoff structure, there exist a countable collection $\mathbb Q$ of finite common priors and a countable family $\mathbb T_i$ of Harsanyi types for each player $i$ such that every common prior in $\mathbb Q$ uses only types in $(\mathbb T_i)_i$ and, for every common prior $P$, there exists a sequence $(Q^n)_n$ in $\mathbb Q$ satisfying
$$
\|\nu_{Q^n}-\nu_P\|_{\mathrm{TV}}
\longrightarrow0.
$$
\end{theorem}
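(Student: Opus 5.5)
The plan is to combine Theorem~\ref{thm:finite-type-approximation} with a rational perturbation argument. By that theorem, every $\nu_P$ is a total-variation limit of $\nu_{Q^M}$, where $Q^M$ is a finitely supported distribution on records consisting of a terminal set and a finite prefix of augmented tags. These records satisfy the identity condition~(2) of Theorem~\ref{RPGeneral}. Each record is an element of the countable set
$$
\mathbb T_i\coloneqq\bigcup_{M\in\naturals}\bigl(\mathcal A_i\times\mathcal C_i^{M+1}\bigr),
$$
which is countable because $\mathcal A_i$ and $\mathcal C_i$ are finite. This is the candidate type family. The candidate collection $\mathbb Q$ consists of all finitely supported distributions on $K\times\prod_i\mathbb T_i$ with rational weights; it is countable.

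First I would check that every finite common prior $Q$ on these types yields $\nu_Q$ continuously in its weights, on any fixed finite support. The \ICRt\ computation is done type by type. On a fixed finite support, every conditional belief $Q_i(t_i)$ is a rational function of the weights, and the terminal sets are determined by finitely many rounds, since a finite type space stabilizes after at most $\sum_i|T_i||A_i|$ rounds. The map from weights to $\nu_Q$ is, however, only piecewise constant in the induced hierarchy, because $\BR_i$ is not continuous. So the key is to perturb the weights of $Q^M$ to rational ones without changing the terminal sets of any type.

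This is the main obstacle. I would handle it with an enlargement trick rather than a delicate perturbation. In $Q^M$ each type's record is reproduced by the identity condition, and the declared terminal actions remain rationalizable. A rational approximation $\tilde Q$ of $Q^M$ on the same support changes each conditional belief by a small amount. The terminal sets can then both gain and lose actions, through the strict inequalities of the mixed-polyhedral cells and through membership in the closed inclusion regions $\mathcal I_i$ of Lemma~\ref{Lem_BRpolytope}.

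To fix this, I would first replace $Q^M$ by a mixture $(1-\delta)Q^M+\delta R$. Here $R$ is a fixed finite rational prior on fresh types, namely records carrying distinct labels in $\mathbb T_i$, extended by an extra index if necessary, which remains countable. The types of $R$ are chosen so that every type's beliefs land in the relative interiors of their cells. Alternatively, if the genericity argument becomes too heavy, I would use a cruder route. By Carathéodory-type density, outcome distributions $\nu_Q$ of finite priors lie in $\Delta(K\times\mathcal A)$, and that simplex has a countable dense subset of rational points. For each rational point $\rho$ that is a limit of induced distributions, and each $n$, I would pick one finite prior $Q_{\rho,n}$ with $\|\nu_{Q_{\rho,n}}-\rho\|_{\mathrm{TV}}<1/n$.

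This cruder route avoids perturbing weights altogether. Let $\mathbb Q$ be the set of these countably many chosen priors, after relabeling their finite type sets injectively into a countable family $\mathbb T_i$, for instance $\naturals\times\naturals$ indexed by $(\rho,n)$. Given $P$ and $\varepsilon>0$, Theorem~\ref{thm:finite-type-approximation} gives a finite $Q^M$ with $\|\nu_{Q^M}-\nu_P\|_{\mathrm{TV}}<\varepsilon$. The closure $\overline{\mathcal V}$ of the set of outcome distributions of finite priors is a subset of a separable metric space. I would choose a countable dense subset $D$ of $\overline{\mathcal V}$ in place of the rational points, which fixes a subtlety with using arbitrary rational $\rho$. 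For each $\rho\in D$ and $n$, I would select one finite prior $Q_{\rho,n}$ with $\nu$ within $1/n$ of $\rho$. Then some $\rho\in D$ lies within $\varepsilon$ of $\nu_{Q^M}$, and $Q_{\rho,n}$ with $1/n<\varepsilon$ lies within $3\varepsilon$ of $\nu_P$. Letting $\varepsilon\to0$ produces the sequence $(Q^n)_n$.
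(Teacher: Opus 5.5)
Your final argument, the ``cruder route'' with a countable dense subset $D$ of $\overline{\mathcal V}$, is correct. It uses the same two ingredients as the paper. Theorem~\ref{thm:finite-type-approximation} places every $\nu_P$ in the closure of outcomes of finite priors, and separability of subsets of a finite-dimensional space lets you select countably many finite priors.

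The difference is where the countable dense subset is taken. The paper works level by level and densifies the graph $\Gamma_n=\{(Q,\nu_Q):Q\in\mathcal Q_n\}$ inside $\Delta(K\times Y^n)\times\Delta(K\times\mathcal A)$. Its selected priors are therefore themselves $n$-truncations on the record spaces $\mathcal A_i\times\mathcal C_i^{n+1}$, so they satisfy the identity condition of Theorem~\ref{RPGeneral}. They are also close to the truncation of $P$ as priors, not only in outcome. You densify only the outcome set $\overline{\mathcal V}\subseteq\Delta(K\times\mathcal A)$, without stratification. This shows the prior coordinate is not needed for the statement as written; the paper's final estimate in fact uses only the $\nu$-component of its $1/n$ bound. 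Your route can be shortened further: $\mathcal V$ is itself separable, so a countable dense subset of $\mathcal V$ is realized exactly by finite priors, and the second index $n$ in $Q_{\rho,n}$ is unnecessary.

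Some side remarks on the earlier parts of your proposal, which your final argument does not need. The enlargement trick would not work as described. Mixing $Q^M$ with a prior $R$ on fresh, disjoint types leaves every conditional belief of the original types unchanged; it merely produces the outcome $(1-\delta)\nu_{Q^M}+\delta\nu_R$. So it cannot move those beliefs into the interiors of their cells. Replacing rational points by $D$ was the right fix, since nothing guarantees that the rational points of $\overline{\mathcal V}$ are dense in it. Finally, the relabeling must index both the prior $(\rho,n)$ and the type within its finite type space, for example $\mathbb T_i=\bigsqcup_{\rho,n}T_i^{\rho,n}$, which is still countable. A set such as $\naturals\times\naturals$ indexed only by $(\rho,n)$ does not suffice.
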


At every truncation level, the graph associating a finite common prior with its terminal \ICRt\,distribution is a subspace of a finite-dimensional Euclidean space and is therefore separable. Taking a countable dense subset at each level and then the union over levels yields the theorem. The detailed argument is given in Appendix~\ref{app:countable-dense}.

\section{Conclusion}\label{sec:conclusion}

For a fixed payoff structure, an information structure projects each type into the output language of \ICRt. Terminal \ICRt\ sets are generally too coarse because they omit the lower-order eliminations that generate them. Under \BR-convexity, full \ICRt\ hierarchies are direct types: recomputing \ICRt\ on the projected model yields the identity. The induced state--hierarchy distributions are characterized by this identity condition and level-by-level obedience. For polyhedral payoff structures, obedience admits explicit linear representations; binary-action structures and ordered structures with interval best-response sets provide natural classes.

For arbitrary finite payoff structures, finite polyhedral \BR-extensions refine best-response regions with finitely many tags. The resulting augmented hierarchies are direct under the augmented best-response operator, while their projections recover the ordinary \ICRt\ hierarchy. Thus every finite payoff structure admits an exact payoff-dependent direct representation.

The size of this representation can be reduced in two directions. Obedience need only be checked at the entries and exits of constant-action-set plateaus, so every ordinary or transition-tagged type requires at most $2|A_i|$ nontrivial conditions. Full augmented hierarchies may form a continuum because their tags can alternate indefinitely while the ordinary hierarchy remains constant. Transition tags discard these intermediate cell crossings and yield an exact countable representation. This cardinality is sharp: the electronic-mail example induces infinitely many ordinary hierarchies with distinct transition dates and therefore cannot be represented exactly on a finite type space. Finite type spaces are nevertheless dense in terminal \ICRt\ outcome distributions.

Table~\ref{tab:cardinality-comparison} summarizes the represented object, scope, cardinality, and number of obedience constraints (OCs) per type for each construction.

\begin{table}[htbp]
\centering
\tiny
\renewcommand{\arraystretch}{1.25}
\begin{tabular}{p{0.23\textwidth}p{0.25\textwidth}p{0.19\textwidth}p{0.14\textwidth}}
\hline
Representation
& Represented object and scope
& Type-space cardinality
& OCs per type\\
\hline
Ordinary hierarchy
& State--ordinary-hierarchy distributions; exact in $\BR$-convex games
& Countable
& Finite ($\leq 2|A_i|$)\\
Full augmented hierarchy
& State--augmented-hierarchy distributions; exact in general games and projecting onto ordinary hierarchies
& Possibly continuum
& Countable\\
Transition-tagged hierarchy
& State--transition-tagged distributions; exact in general games and projecting onto ordinary hierarchies
& Countable
& Finite ($\leq 2|A_i|$)\\
Finite-type approximation
& State--terminal-\ICRt-set distributions; approximate in general games
& Finite
& Finite\\
\hline
\end{tabular}
\caption{Summary of direct-representation and approximation results for finite games.}
\label{tab:cardinality-comparison}
\end{table}

For binary-action supermodular games, \citet{morris2020implementation} go further by characterizing the closure of state--action distributions induced in the smallest equilibrium and using this characterization for information design. Our results provide an exact direct representation of state--\ICRt\ hierarchy distributions for arbitrary finite games, but do not characterize their projection onto rationalizable state--action distributions. We leave that further step for future work.

\newpage
\bibliographystyle{ecta}

\bibliography{DRep}

\appendix

\section{Proof of the Best-Response Geometry Lemma}
\label{app:br-geometry}

\begin{proof}[Proof of Lemma~\ref{Lem_BRpolytope}]
\noindent\textbf{(i).}
Fix $a_i\in A_i$. The condition $a_i\in\BR_i(p)$ is equivalent to the existence of nonnegative variables $q(k,B_{-i},a_{-i})$ such that
$$
q(k,B_{-i},a_{-i})=0 \quad\text{if }a_{-i}\notin B_{-i},
$$
$$
\sum_{a_{-i}\in B_{-i}}q(k,B_{-i},a_{-i})=p(k,B_{-i})
\quad\text{for every }(k,B_{-i}),
$$
and, for every $a_i'\in A_i$,
$$
\sum_{k,B_{-i},a_{-i}}
\bigl(u_i(k,a_i,a_{-i})-u_i(k,a_i',a_{-i})\bigr)
q(k,B_{-i},a_{-i})\geq 0.
$$
These conditions form a finite system of weak affine inequalities and equalities in $(p,q)$. By Fourier--Motzkin elimination \citep[Section~12.2, pp.~155--156]{Schrijver1986}, eliminating the variables $q$ yields an equivalent finite system of weak affine inequalities in $p$. Hence $\mathcal I_i(\{a_i\})$ is a polyhedron.

\noindent\textbf{(ii).}
For every $B_i\in\mathcal A_i$,
$$
\mathcal I_i(B_i)
=
\bigcap_{a_i\in B_i}\mathcal I_i(\{a_i\}),
$$
so (ii) follows from (i).

\noindent\textbf{(iii).}
Recall that
$$
\BR_i^{-1}(B_i)
=
\mathcal I_i(B_i)
\setminus
\bigcup_{a_i\in A_i\setminus B_i}
\mathcal I_i(B_i\cup\{a_i\}).
$$
By (ii), each best-response inclusion region in this expression is defined by finitely many weak affine inequalities. The complement of such a region is a finite union of sets defined by one strict affine inequality. Distributing finite intersections over finite unions expresses $\BR_i^{-1}(B_i)$ as a finite union of mixed polyhedra.
\end{proof}

\section{Proof of the Interval Best-Response Result}

\subsection{Concavity and Increasing Differences}
\label{app:concavity-increasing-differences}

\begin{proof}[Proof of Proposition~\ref{prop:concavity-increasing-differences}]
Fix a player $i$ and a belief $p_i\in\Delta(K\times\mathcal A_{-i})$. For every $q_i\in\mathcal Q_i(p_i)$, let
$$
V_{q_i}(a_i)
\coloneqq
\mathbb E_{q_i}
u_i(\cdot,a_i,\cdot)
$$
and, for $r=1,\ldots,R_i-1$, let
$$
D_{q_i}^r
\coloneqq
V_{q_i}(a_i^{r+1})-V_{q_i}(a_i^r).
$$
Own-action concavity implies
$$
D_{q_i}^1\geq\cdots\geq D_{q_i}^{R_i-1},
$$
so the set of maximizers of $V_{q_i}$ is an interval.

We next show that the union of these maximizer sets over compatible distributions is an interval. Suppose that $a_i^r$ is optimal under $q_i^0\in\mathcal Q_i(p_i)$ and $a_i^s$ is optimal under $q_i^1\in\mathcal Q_i(p_i)$, where $r<t<s$. Optimality and decreasing increments imply
$$
D_{q_i^0}^t\leq0
\qquad\text{and}\qquad
D_{q_i^1}^{t-1}\geq0.
$$
For $\lambda\in[0,1]$, let
$$
q_i^\lambda
=
(1-\lambda)q_i^0+\lambda q_i^1.
$$
Then $q_i^\lambda\in\mathcal Q_i(p_i)$, and the increments $D_{q_i^\lambda}^j$ are affine in $\lambda$.

If $D_{q_i^0}^{t-1}\geq0$, then
$$
D_{q_i^0}^{t-1}\geq0\geq D_{q_i^0}^t.
$$
Since the increments are decreasing, all increments before $t$ are weakly positive and all increments from $t$ onward are weakly negative. Hence $a_i^t$ is optimal under $q_i^0$. Otherwise, there exists $\lambda\in(0,1]$ such that $D_{q_i^\lambda}^{t-1}=0$. Since the increments are decreasing,
$$
D_{q_i^\lambda}^t
\leq
D_{q_i^\lambda}^{t-1}
=
0.
$$
Again, all increments before $t$ are weakly positive and all increments from $t$ onward are weakly negative. Hence $a_i^t$ is optimal under $q_i^\lambda$. Therefore $\BR_i(p_i)$ is an interval.

It remains to prove the extremal-distribution property. Fix any $q_i\in\mathcal Q_i(p_i)$. Under $q_i$, the action profile $a_{-i}$ is coordinatewise below $\max B_{-i}$ almost surely. Since $q_i$ and $\overline q_i(p_i)$ have the same marginal on $(k,B_{-i})$, increasing differences imply that, for every $a_i'\geq a_i$,
$$
\mathbb E_{\overline q_i(p_i)}
\left[
u_i(\cdot,a_i',\cdot)-u_i(\cdot,a_i,\cdot)
\right]
\geq
\mathbb E_{q_i}
\left[
u_i(\cdot,a_i',\cdot)-u_i(\cdot,a_i,\cdot)
\right].
$$
Let $\overline a_i$ be the greatest maximizer under $\overline q_i(p_i)$. If some $a_i>\overline a_i$ were optimal under $q_i$, then
$$
\mathbb E_{\overline q_i(p_i)}
\left[
u_i(\cdot,a_i,\cdot)-u_i(\cdot,\overline a_i,\cdot)
\right]
\geq
\mathbb E_{q_i}
\left[
u_i(\cdot,a_i,\cdot)-u_i(\cdot,\overline a_i,\cdot)
\right]
\geq0.
$$
Since $\overline a_i$ is optimal under $\overline q_i(p_i)$, $a_i$ must also be optimal under $\overline q_i(p_i)$, contradicting the definition of $\overline a_i$ as its greatest maximizer. Thus every action in $\BR_i(p_i)$ is weakly below $\overline a_i$. Since $\overline a_i\in\BR_i(p_i)$, it follows that
$$
\max\BR_i(p_i)=\overline a_i.
$$
The analogous argument using $\underline q_i(p_i)$ shows that $\min\BR_i(p_i)$ is a best response under the minimal compatible distribution. Hence the payoff structure has interval best-response sets.
\end{proof}

\section{Proofs for the Approximation Results}

\subsection{Finite-Type Approximation}
\label{app:finite-approximation}

We prove Theorem~\ref{thm:finite-type-approximation}. Fix a common prior $P\in\Delta(K\times T)$ and a finite polyhedral $\BR$-extension, and write
$$
c_i^m(t_i)\coloneqq\AICR_i^m(t_i),
\qquad
R_i^m(t_i)\coloneqq\alpha_i(c_i^m(t_i))
=\ICR_i^m(t_i),
\qquad
R_i^\infty(t_i)\coloneqq\ICR_i^\infty(t_i).
$$
For every $M\geq0$, define
$$
\phi_i^M(t_i)
\coloneqq
\bigl(
R_i^\infty(t_i),c_i^0(t_i),\ldots,c_i^M(t_i)
\bigr),
$$
let $X_i^M\coloneqq\phi_i^M(T_i)$, and define
$$
Q^M
\coloneqq
(\id\times\phi^M)_\#P
\in\Delta(K\times X^M).
$$
We discard types with zero marginal probability. For
$$
x_i=
(R_i^\infty,c_i^0,\ldots,c_i^M)\in X_i^M,
$$
write $R_i^\infty(x_i)=R_i^\infty$ and $c_i^m(x_i)=c_i^m$.

The proof uses two properties of the best-response operator. First, every cell of the tagging map is convex, so conditional averaging within a cell preserves its tag. Second, the best-response operator is monotone with respect to opponents' feasible action sets: enlarging these sets can only enlarge the best-response set. The latter property is established by \citet[Lemma~4.1]{govei2026strategic}.

For $m\geq1$, define the ordinary level-$m$ belief
$$
p_i^m(t_i)
\coloneqq
(\id\times R_{-i}^{m-1})_\#P_i(t_i),
$$
and define the terminal belief
$$
p_i^\infty(t_i)
\coloneqq
(\id\times R_{-i}^\infty)_\#P_i(t_i).
$$

\begin{lemma}[Terminal actions remain best responses]
\label{lem:terminal-actions-best-responses}
For every player $i$ and almost every type $t_i$,
$$
R_i^\infty(t_i)\subseteq\BR_i\bigl(p_i^\infty(t_i)\bigr).
$$
\end{lemma}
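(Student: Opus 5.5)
The plan is a limiting argument: show that each terminal action is a best response at every finite round, and then pass to the limit using closedness of the inclusion region $\mathcal I_i(\{a_i\})$ from Lemma~\ref{Lem_BRpolytope}(i).

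Fix a player $i$ and a type $t_i$ outside the null set where the kernel $P_i(t_i)$ fails to be a version of the conditional distribution; the inclusion will in fact hold pointwise for the selected kernel. Fix $a_i\in R_i^\infty(t_i)$.

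\emph{Step 1 (finite rounds).} For every $m\geq1$ we have $a_i\in R_i^m(t_i)=\BR_i(p_i^m(t_i))$, since $R_i^\infty(t_i)=\bigcap_m R_i^m(t_i)$. Equivalently, $p_i^m(t_i)\in\mathcal I_i(\{a_i\})$ for every $m\geq1$.

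\emph{Step 2 (convergence of beliefs).} We show that $p_i^m(t_i)\to p_i^\infty(t_i)$ in the finite-dimensional simplex $\Delta(K\times\mathcal A_{-i})$. By Lemma~\ref{lem:ICR-properties}, each opponent hierarchy $(R_j^m(t_j))_m$ is a decreasing sequence of nonempty subsets of the finite set $A_j$. It is therefore eventually constant and equal to $R_j^\infty(t_j)$. Consequently, for every $t_{-i}$ the map $m\mapsto(k,R_{-i}^{m-1}(t_{-i}))$ eventually equals $(k,R_{-i}^\infty(t_{-i}))$. Hence, for each of the finitely many points $(k,B_{-i})$, the indicator of the event $\{R_{-i}^{m-1}=B_{-i}\}$ converges pointwise in $t_{-i}$ to the indicator of $\{R_{-i}^\infty=B_{-i}\}$. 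Dominated convergence under the probability measure $P_i(t_i)$ then gives convergence of each coordinate of $p_i^m(t_i)$. Since the simplex is finite-dimensional, this is convergence in total variation. Measurability of the maps $R_{-i}^m$ and $R_{-i}^\infty$, needed for these pushforwards to be defined, is supplied by Lemma~\ref{lem:ICR-properties}.

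\emph{Step 3 (closedness).} By Lemma~\ref{Lem_BRpolytope}(i), $\mathcal I_i(\{a_i\})$ is a polyhedron, hence closed. The limit $p_i^\infty(t_i)$ of the sequence from Step 1 therefore lies in $\mathcal I_i(\{a_i\})$, that is, $a_i\in\BR_i(p_i^\infty(t_i))$. Since $a_i\in R_i^\infty(t_i)$ was arbitrary, we obtain $R_i^\infty(t_i)\subseteq\BR_i(p_i^\infty(t_i))$.

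The only delicate point is Step 2, and specifically the fact that stabilization times are not uniform across opponent types. This is why I use pointwise convergence plus dominated convergence rather than a uniform bound. Equivalently, one can bound the total variation distance by $P_i(t_i)\bigl(R_{-i}^{m-1}\neq R_{-i}^\infty\bigr)$ and note that this probability tends to zero by continuity from above, since these events decrease to the empty set. No convexity is needed here, because the argument is type by type.
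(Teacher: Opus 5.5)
Your proposal is correct and follows essentially the same argument as the paper. You show that $p_i^m(t_i)\in\mathcal I_i(\{a_i\})$ at every finite round, that $p_i^m(t_i)\to p_i^\infty(t_i)$ in total variation via the bound $P_i(t_i)\bigl(R_{-i}^{m-1}\neq R_{-i}^\infty\bigr)\to0$, and then conclude from closedness of $\mathcal I_i(\{a_i\})$ by Lemma~\ref{Lem_BRpolytope}(i); your remark that the argument holds pointwise for the selected kernel, so the ``almost every'' qualifier is not actually needed, is also accurate.
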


\begin{proof}
Under $P_i(t_i)$, the random sets $R_{-i}^{m-1}(t_{-i})$ decrease pointwise to $R_{-i}^\infty(t_{-i})$. Since $\mathcal A_{-i}$ is finite,
$$
\left\|
p_i^m(t_i)-p_i^\infty(t_i)
\right\|_{\mathrm{TV}}
\leq
P_i(t_i)\bigl(
R_{-i}^{m-1}\neq R_{-i}^\infty
\bigr)
\longrightarrow0
$$
for almost every $t_i$.

Fix $a_i\in R_i^\infty(t_i)$. Then $a_i\in R_i^m(t_i)$ and hence
$$
a_i\in\BR_i\bigl(p_i^m(t_i)\bigr)
$$
for every $m\geq1$. By Lemma~\ref{Lem_BRpolytope}, the best-response inclusion region $\mathcal I_i(\{a_i\})$ is closed. Passing to the limit gives
$$
a_i\in\BR_i\bigl(p_i^\infty(t_i)\bigr).
$$
\end{proof}

Let $\widehat c_i^{m,M}(x_i)$ and $\widehat R_i^{m,M}(x_i)$ denote the augmented and ordinary ICR iterates generated by $Q^M$, respectively. Thus
$$
\widehat R_i^{m,M}(x_i)
=
\alpha_i\bigl(\widehat c_i^{m,M}(x_i)\bigr).
$$

\begin{lemma}[Properties of the truncations]
\label{lem:finite-prefix-model}
For every $M\geq0$, the truncation $Q^M$ has the following properties.
\begin{enumerate}[label=(\roman*)]
\item For every player $i$, every positive-probability type $x_i$, and every $0\leq m\leq M$,
$$
\widehat c_i^{m,M}(x_i)=c_i^m(x_i).
$$
Consequently,
$$
\widehat R_i^{m,M}(x_i)
=
\alpha_i(c_i^m(x_i)).
$$
\item For every player $i$, every positive-probability type $x_i$, and every $m\geq0$,
$$
R_i^\infty(x_i)\subseteq\widehat R_i^{m,M}(x_i).
$$
\end{enumerate}
\end{lemma}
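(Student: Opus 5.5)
Both parts are proved by induction on the round $m$, working with the truncated model $Q^M$ viewed as a common prior on $K\times X^M$. Under $Q^M$, the conditional belief of a positive-probability type $x_i$ is the average of the original conditional beliefs $P_i(t_i)$ over $t_i\in(\phi_i^M)^{-1}(x_i)$, pushed forward through $\id\times\phi_{-i}^M$. Every statistic of opponents' types that is a function of $\phi_{-i}^M$ therefore has the conditional law obtained by the law of iterated expectations. This holds in particular for $c_{-i}^{m-1}$ with $m-1\leq M$ and for $R_{-i}^\infty$.

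For (i), the case $m=0$ is immediate, since $\widehat c_i^{0,M}=c_i^*=c_i^0$. Suppose that $\widehat c_j^{m-1,M}(x_j)=c_j^{m-1}(x_j)$ for every player $j$ and every positive-probability $x_j$, with $m\leq M$. Then the tagged belief $(\id\times\widehat c_{-i}^{m-1,M})_\#Q^M(\cdot\mid x_i)$ equals the conditional average of the original tagged beliefs $(\id\times c_{-i}^{m-1})_\#P_i(t_i)$ over types $t_i$ with $\phi_i^M(t_i)=x_i$. Every such original belief lies in the cell $\beta_i^{-1}(c_i^m(x_i))$, because $c_i^m(t_i)$ is part of the record $x_i$. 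That cell is convex, since a polyhedral extension is $\BR$-convex, so the average lies in it as well and $\beta_i$ returns $c_i^m(x_i)$. The formula for $\widehat R_i^{m,M}$ then follows from $\alpha_i\circ\beta_i=\BR_i\circ(\id\times\alpha_{-i})_\#$, exactly as in Proposition~\ref{prop:AICR-projects}. The only care needed is almost-sure bookkeeping: types with zero marginal are discarded, and the regular conditional probabilities must be consistent.

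For (ii), induct on $m$ for all players simultaneously. The case $m=0$ is trivial. Assume $R_j^\infty(x_j)\subseteq\widehat R_j^{m-1,M}(x_j)$ for all $j$ and all $x_j$. Fix $a_i\in R_i^\infty(x_i)$. By Lemma~\ref{lem:terminal-actions-best-responses}, $a_i\in\BR_i(p_i^\infty(t_i))$ for almost every $t_i$ with record $x_i$, so $p_i^\infty(t_i)\in\mathcal I_i(\{a_i\})$. This region is convex by Lemma~\ref{Lem_BRpolytope}(i). Averaging over $t_i$ given $x_i$ shows that the belief $\hat p:=(\id\times R_{-i}^\infty)_\#Q^M(\cdot\mid x_i)$, which is computed from the recorded terminal sets, lies in $\mathcal I_i(\{a_i\})$. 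By the induction hypothesis, the actual round-$m$ belief $(\id\times\widehat R_{-i}^{m-1,M})_\#Q^M(\cdot\mid x_i)$ is obtained from $\hat p$ by enlarging each opponent's set almost surely under a common coupling. Monotonicity of $\BR_i$ with respect to such nested enlargements (\citet[Lemma~4.1]{govei2026strategic}) then gives $a_i\in\widehat R_i^{m,M}(x_i)$.

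The main obstacle is this last step. The belief generated by $Q^M$ is defined through the generated iterates $\widehat R$, which beyond round $M$ are not tied to any recorded data. The monotonicity lemma therefore has to be applied to the joint law of $(k,R_{-i}^\infty,\widehat R_{-i}^{m-1,M})$ under $Q^M(\cdot\mid x_i)$, where both coordinates are functions of $x_{-i}$ and nesting holds pointwise. Once that coupling is written down, the conclusion follows.
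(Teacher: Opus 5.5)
Your proposal is correct and follows essentially the same route as the paper. Part (i) goes by induction, using that the pooled tagged belief is a conditional average of original beliefs lying in the convex cell $\beta_i^{-1}(c_i^m(x_i))$. Part (ii) first establishes pooled terminal obedience via convexity of the inclusion region, then inducts with the monotonicity lemma applied to the nested pair $(R_{-i}^\infty,\widehat R_{-i}^{m-1,M})$ under $Q^M(\cdot\mid x_i)$. The only cosmetic difference is that you work action by action with $\mathcal I_i(\{a_i\})$, whereas the paper uses $\mathcal I_i(R_i^\infty(x_i))=\bigcap_{a_i\in R_i^\infty(x_i)}\mathcal I_i(\{a_i\})$ directly.
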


\begin{proof}
We first prove (i). The equality holds at level zero because both augmented iterations start at $c_i^*$. Suppose it holds through level $m<M$.

For an original type $t_i$, define
$$
q_i^{m+1}(t_i)
\coloneqq
(\id\times c_{-i}^m)_\#P_i(t_i).
$$
By the definition of the augmented hierarchy,
$$
q_i^{m+1}(t_i)
\in
\beta_i^{-1}\bigl(c_i^{m+1}(t_i)\bigr).
$$

Fix
$$
x_i=(R_i^\infty,c_i^0,\ldots,c_i^M)
$$
and let $\lambda_i^M(dt_i\mid x_i)$ be the conditional distribution of $t_i$ given $\phi_i^M(t_i)=x_i$. By the induction hypothesis and the law of iterated expectations, player $i$'s level-$(m+1)$ belief under $Q^M$ is
$$
\widehat q_i^{m+1,M}(x_i)
=
\int q_i^{m+1}(t_i)\,
\lambda_i^M(dt_i\mid x_i).
$$
Every type in the class $\{\phi_i^M=x_i\}$ has the same coordinate $c_i^{m+1}(x_i)$. Hence
$$
q_i^{m+1}(t_i)
\in
\beta_i^{-1}\bigl(c_i^{m+1}(x_i)\bigr)
$$
for $\lambda_i^M(\cdot\mid x_i)$-almost every $t_i$. The cell $\beta_i^{-1}(c_i^{m+1}(x_i))$ is convex, so the average $\widehat q_i^{m+1,M}(x_i)$ belongs to the same cell. Therefore
$$
\widehat c_i^{m+1,M}(x_i)=c_i^{m+1}(x_i),
$$
which proves (i).

We next prove (ii). Conditional on $x_i$, player $i$'s belief under $Q^M$ about the state and the opponents' declared terminal \ICRt\ sets is
$$
\overline p_i^M(x_i)
\coloneqq
\marg_{k,R_{-i}^\infty}Q^M(\cdot\mid x_i).
$$
Disintegration gives
$$
\overline p_i^M(x_i)
=
\int p_i^\infty(t_i)\,
\lambda_i^M(dt_i\mid x_i).
$$
Every type in this class has terminal \ICRt\ set $R_i^\infty(x_i)$. By Lemma~\ref{lem:terminal-actions-best-responses},
$$
R_i^\infty(x_i)
\subseteq
\BR_i\bigl(p_i^\infty(t_i)\bigr)
$$
for almost every such $t_i$. The best-response inclusion region $\mathcal I_i(R_i^\infty(x_i))$ is convex by Lemma~\ref{Lem_BRpolytope}. Hence
\begin{equation}
R_i^\infty(x_i)
\subseteq
\BR_i\bigl(\overline p_i^M(x_i)\bigr).
\label{eq:pooled-terminal-obedience}
\end{equation}

We prove (ii) by induction on $m$. The claim holds at level zero because $R_i^\infty(x_i)\subseteq A_i$. Suppose that
$$
R_j^\infty(x_j)\subseteq\widehat R_j^{m,M}(x_j)
$$
for every player $j$ and every positive-probability type $x_j$. Conditional on $x_i$, consider the joint distribution under $Q^M$ of
$$
\bigl(
k,R_{-i}^\infty(x_{-i}),\widehat R_{-i}^{m,M}(x_{-i})
\bigr).
$$
The induction hypothesis gives
$$
R_{-i}^\infty(x_{-i})
\subseteq
\widehat R_{-i}^{m,M}(x_{-i})
$$
almost surely. By \eqref{eq:pooled-terminal-obedience} and the monotonicity of $\BR_i$ established by \citet[Lemma~4.1]{govei2026strategic},
$$
R_i^\infty(x_i)
\subseteq
\BR_i\left(
\marg_{k,\widehat R_{-i}^{m,M}}
Q^M(\cdot\mid x_i)
\right)
=
\widehat R_i^{m+1,M}(x_i).
$$
\end{proof}

\begin{proof}[Proof of Theorem~\ref{thm:finite-type-approximation}]
Define
$$
G_M
\coloneqq
\left\{
t\in T:
R_i^M(t_i)=R_i^\infty(t_i)
\text{ for every }i
\right\}.
$$
Consider a type $x_i=\phi_i^M(t_i)$ such that $R_i^M(t_i)=R_i^\infty(t_i)$. By Lemma~\ref{lem:finite-prefix-model}(i),
$$
\widehat R_i^{M,M}(x_i)
=
R_i^M(t_i)
=
R_i^\infty(x_i).
$$
Lemma~\ref{lem:finite-prefix-model}(ii) and the decreasing monotonicity of the ICR iteration imply that, for every $m\geq M$,
$$
R_i^\infty(x_i)
\subseteq
\widehat R_i^{m,M}(x_i)
\subseteq
\widehat R_i^{M,M}(x_i)
=
R_i^\infty(x_i).
$$
Hence
\begin{equation}
\widehat R_i^{\infty,M}(x_i)=R_i^\infty(x_i).
\label{eq:terminal-on-stabilized-prefix}
\end{equation}

Couple $P$ and $Q^M$ by drawing $(k,t)$ according to $P$ and setting $x=\phi^M(t)$. Under this coupling, the distribution of $(k,R^\infty(t))$ is $\nu_P$. On $G_M$, equation~\eqref{eq:terminal-on-stabilized-prefix} holds for every player, so the terminal \ICRt\ profile generated under $Q^M$ equals $R^\infty(t)$. Therefore
$$
\|\nu_{Q^M}-\nu_P\|_{\mathrm{TV}}
\leq
P(G_M^\complement).
$$

For every player $i$ and every type $t_i$, the decreasing sequence $(R_i^m(t_i))_m$ consists of subsets of the finite set $A_i$ and therefore eventually equals its intersection $R_i^\infty(t_i)$. Hence
$$
\boldsymbol 1_{\{R_i^M\neq R_i^\infty\}}
\longrightarrow0
$$
pointwise. Since the player set is finite, dominated convergence gives
$$
P(G_M^\complement)\longrightarrow0.
$$
This proves the approximation claim.

Each $Q^M$ has finite type spaces. Its induced augmented-hierarchy distribution therefore has finite support and satisfies condition~(2) of Theorem~\ref{RPGeneral}. It remains to verify eventual constancy. For every positive-probability type, the ordinary ICR hierarchy eventually stabilizes. Since the type spaces are finite, there is a common round $L_M$ after which $\widehat R_i^{m,M}(x_i)$ is constant in $m$ for every player and every positive-probability type. For $m\geq L_M$, the belief of $x_i$ about
$$
\bigl(k,\widehat R_{-i}^{m,M}(x_{-i})\bigr)
$$
is independent of $m$. The tagging map $\tau_i$ therefore assigns the same tag at every subsequent round. Thus every augmented hierarchy generated by $Q^M$ is eventually constant.
\end{proof}

\subsection{A Countable Outcome-Dense Family of Types}
\label{app:countable-dense}

We prove Theorem~\ref{thm:countable-dense-types}. The proof works separately at each truncation level. Since the map from a common prior to its terminal \ICRt\,distribution need not be continuous, density in the simplex of priors is insufficient. We instead approximate the graph of this map.

Fix a finite polyhedral $\BR$-extension, which exists by Lemma~\ref{lem:exist-extension}. For every $n\in\naturals$, let
$$
Y_i^n
\coloneqq
\mathcal A_i\times\mathcal C_i^{n+1},
$$
and let $\mathcal Q_n\subseteq\Delta(K\times Y^n)$ be the set of common priors that arise as $n$-truncations. For $Q\in\mathcal Q_n$, denote by $\nu_Q\in\Delta(K\times\mathcal A)$ its terminal ICR distribution, and define
$$
\Gamma_n
\coloneqq
\left\{
(Q,\nu_Q):Q\in\mathcal Q_n
\right\}
\subseteq
\Delta(K\times Y^n)\times\Delta(K\times\mathcal A).
$$

\begin{lemma}[Countable approximation at each truncation level]
\label{lem:countable-truncation-stratum}
For every $n$, there exists a countable family $(Q^{m,n})_{m\in\naturals}$ in $\mathcal Q_n$ such that
$$
\left\{
(Q^{m,n},\nu_{Q^{m,n}}):m\in\naturals
\right\}
$$
is dense in $\Gamma_n$.
\end{lemma}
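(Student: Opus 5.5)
The plan is to reduce the claim to separability of a subset of a finite-dimensional Euclidean space. For fixed $n$, the sets $K$, $\mathcal A_i$, and $\mathcal C_i$ are finite, so $Y^n=\prod_i Y_i^n$ is finite. Hence $\Delta(K\times Y^n)$ is a simplex in $\mathbb R^{|K\times Y^n|}$ and $\Delta(K\times\mathcal A)$ is a simplex in $\mathbb R^{|K\times\mathcal A|}$. The product $\Delta(K\times Y^n)\times\Delta(K\times\mathcal A)$ therefore sits inside a finite-dimensional Euclidean space. On these finite simplices, total variation is equivalent to the Euclidean norm, so density may be understood in either sense. We do not need any continuity of the map $Q\mapsto\nu_Q$, which may well fail, because we approximate points of the graph $\Gamma_n$ directly rather than priors alone.

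The key step is the standard fact that every subset of a separable metric space is separable. Since $\mathbb R^d$ is second countable, the subspace $\Gamma_n$ with the induced metric is second countable, and therefore separable. Concretely, we fix a countable base of $\mathbb R^d$, for instance open balls with rational centers and rational radii. For each basis element that meets $\Gamma_n$, we select one point of $\Gamma_n$ inside it. Countable choice suffices for this selection.

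Each selected point has the form $(Q,\nu_Q)$ with $Q\in\mathcal Q_n$. We enumerate these points as $(Q^{m,n},\nu_{Q^{m,n}})_{m\in\naturals}$, repeating entries if only finitely many basis elements meet $\Gamma_n$. If $\Gamma_n$ is empty the statement is vacuous, though in fact it is nonempty, since for example the trivial prior's truncation lies in $\mathcal Q_n$.

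It remains to check density. Given $(Q,\nu_Q)\in\Gamma_n$ and $\varepsilon>0$, some rational ball of radius less than $\varepsilon/2$ contains $(Q,\nu_Q)$. That ball meets $\Gamma_n$, so the point chosen from it lies within $\varepsilon$ of $(Q,\nu_Q)$ in both coordinates simultaneously.

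I expect no real obstacle here. The only point needing care is that density holds in the graph, which controls $Q$ and $\nu_Q$ jointly. That joint control is exactly what the later proof of Theorem~\ref{thm:countable-dense-types} needs, since it does not require $\nu$ to be continuous in $Q$. It also matters that every chosen point is a genuine element of $\mathcal Q_n$, which holds because every chosen point is taken from $\Gamma_n$ itself.
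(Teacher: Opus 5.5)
Your proposal is correct and follows the same argument as the paper: $\Gamma_n$ is a subset of a finite-dimensional, hence second-countable, space and is therefore separable. You simply make the countable-base selection and the joint control of $(Q,\nu_Q)$ explicit, and the paper's proof leaves both implicit.
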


\begin{proof}
The product
$$
\Delta(K\times Y^n)\times\Delta(K\times\mathcal A)
$$
is finite dimensional and hence second countable. Its subspace $\Gamma_n$ is therefore second countable and hence separable.
\end{proof}

For each $n$, fix a family as in Lemma~\ref{lem:countable-truncation-stratum}, and define
$$
\mathbb Q
\coloneqq
\left\{
Q^{m,n}:m,n\in\naturals
\right\}.
$$
For each player $i$, let $\mathbb T_i$ be the disjoint union of the finite type spaces used by the common priors in $\mathbb Q$. Each selected common prior embeds in these common countable type spaces by placing its distribution on its own component; its conditional beliefs on that component are unchanged. Thus $\mathbb Q$ and every $\mathbb T_i$ are countable. Alternatively, one may embed all selected common priors in the universal Harsanyi type space and take the union of their images.

\begin{proof}[Proof of Theorem~\ref{thm:countable-dense-types}]
Let $P$ be a common prior, and let $\widehat Q^n$ be its $n$-truncation. By Theorem~\ref{thm:finite-type-approximation},
$$
\|\nu_{\widehat Q^n}-\nu_P\|_{\mathrm{TV}}
\longrightarrow0.
$$
Since $(\widehat Q^n,\nu_{\widehat Q^n})\in\Gamma_n$, Lemma~\ref{lem:countable-truncation-stratum} allows us to choose $m(n)$, for every $n\geq1$, such that
$$
\|Q^{m(n),n}-\widehat Q^n\|_{\mathrm{TV}}
+
\|\nu_{Q^{m(n),n}}-\nu_{\widehat Q^n}\|_{\mathrm{TV}}
<
\frac1n.
$$
It follows that
$$
\|\nu_{Q^{m(n),n}}-\nu_P\|_{\mathrm{TV}}
\leq
\|\nu_{Q^{m(n),n}}-\nu_{\widehat Q^n}\|_{\mathrm{TV}}
+
\|\nu_{\widehat Q^n}-\nu_P\|_{\mathrm{TV}}
\longrightarrow0.
$$
The sequence $(Q^{m(n),n})_n$ belongs to $\mathbb Q$ and uses only types in $(\mathbb T_i)_i$.
\end{proof}

The families $\mathbb Q$ and $(\mathbb T_i)_i$ depend on the fixed payoff structure. The theorem concerns density of terminal \ICRt\,outcome distributions in that payoff structure. It does not assert density of $(\mathbb T_i)_i$ in the strategic topology on universal types, which compares behavior across payoff structures.

\end{document}